\begin{document}
\frontmatter          % for the preliminaries
\pagestyle{headings}  % switches on printing of running heads

\mainmatter              % start of the contributions
\title{A Constant Factor Approximation Algorithm for Boxicity of Circular Arc Graphs}
\titlerunning{Algorithms for Boxicity}  % abbreviated title (for running head)
\author{Abhijin Adiga \and Jasine Babu \and L. Sunil Chandran}
\authorrunning{Adiga et.al} % abbreviated author list (for running head)
\institute{Department of Computer Science and Automation,\\Indian Institute of Science, Bangalore 560012, India.\\
\email{\{abhijin, jasine, sunil\}@csa.iisc.ernet.in}}
\maketitle              % typeset the title of the contribution
\begin{abstract}
Boxicity of a graph $G(V,E)$ is the minimum  integer $k$ such that $G$ can be represented as the intersection graph of $k$-dimensional axis parallel rectangles in $\mathbf{R}^k$. Equivalently, it is the minimum number of interval graphs on the vertex set $V$ such that the intersection of their edge sets is $E$. It is known that boxicity cannot be approximated even for graph classes like bipartite, co-bipartite and split graphs below $O(n^{0.5 - \epsilon})$-factor, for any $\epsilon >0$ in polynomial time unless $NP=ZPP$. Till date, there is no well known graph class of unbounded boxicity for which even an $n^\epsilon$-factor approximation algorithm for computing boxicity is known, for any $\epsilon <1$. In this paper, we study the boxicity problem on Circular Arc graphs - intersection graphs of arcs of a circle. We give a $(2+\frac{1}{k})$-factor polynomial time approximation algorithm for computing the boxicity of any circular arc graph along with a corresponding box representation, where $k \ge 1$ is its boxicity. For Normal Circular Arc(NCA) graphs, with an NCA model given, this can be improved to an additive $2$-factor approximation algorithm. The time complexity of the algorithms to approximately compute the boxicity is $O(mn+n^2)$ in both these cases and in $O(mn+kn^2)= O(n^3)$ time we also get their corresponding box representations, where $n$ is the number of vertices of the graph and $m$ is its number of edges. The additive $2$-factor algorithm directly works for any Proper Circular Arc graph, since computing an NCA model for it can be done in polynomial time.
\keywords{Boxicity, Circular Arc Graphs, Approximation Algorithm}
\end{abstract} 
\section{Introduction} \label{intro}
\textbf{Boxicity: }
Boxicity of a graph $G(V,E)$ is defined as the minimum number of interval graphs on the vertex set $V$ such that the intersection of their edge sets is $E$. If $I_1, I_2, \cdots, I_k$ are interval graphs on the vertex set $V$ such that $G=I_1 \cap I_2 \cap \cdots \cap I_k$, then $\{I_1, I_2, \cdots, I_k\}$ is called a box representation of $G$ and $k$ is called the dimension of the representation. Equivalently, boxicity is the minimum integer $k$ such that $G$ can be represented as the intersection graph of $k$-dimensional axis parallel rectangles in $\mathbf{R}^k$. Boxicity was introduced by Roberts \cite{Rob1} in 1968. If we have a box representation of dimension $k$ for a graph $G$ on $n$ vertices, it can be stored using $O(nk)$ space, whereas an adjacency list representation will need $O(m)$ space which is $O(n^2)$ for dense graphs. The availability of a box representation in low dimension makes some well known NP-hard problems like max-clique, polynomial time solvable\cite{Rosgen}.

Boxicity is combinatorially well studied and many bounds are known in terms of parameters like maximum degree, minimum vertex cover size and tree-width \cite{SUCVC}. Boxicity of any graph is upper bounded by $\lfloor \frac{n}{2}\rfloor$ where $n$ is the number of vertices of the graph. It was shown by Scheinerman \cite{Sch1} in 1984 that the boxicity of outer planar graphs is at most two. In 1986, Thomassen \cite{Thom1} proved that the boxicity of planar graphs is at most 3.  This parameter is also studied in relation with other dimensional parameters of graphs like partial order dimension and threshold dimension \cite{Abh1,Yan1}.

However, computing boxicity is a notoriously hard algorithmic problem. In 1981, Cozzens\cite{Coz1} showed that computing Boxicity is NP-Hard. Later Yannakakis \cite{Yan1}  proved that determining whether boxicity of a graph is at most three is NP-Complete and Kratochvil\cite{Krat1} strengthened this by showing that determining whether boxicity of a graph is at most two itself is NP-Complete. Recently, Adiga et.al \cite{Abh1} proved that no polynomial time algorithm for approximating boxicity of bipartite graphs with approximation factor less than $O(n^{0.5 - \epsilon})$ is possible unless $NP=ZPP$. Same non-approximability holds in the case of split graphs and co-bipartite graphs too. 
Even an $n^\epsilon$-factor approximation algorithm, with $\epsilon < 1$ for boxicity is not known till now, for any well known graph class of unbounded boxicity. In this paper, we present a polynomial time $(2+\frac{1}{k})$-factor approximation algorithm for finding the boxicity of circular arc graphs along with the corresponding box representation, where $k \ge 1$ is the boxicity of the graph. There exist circular arc graphs of arbitrarily high boxicity including the well known Robert's graph (the complement of a perfect matching on $n$ vertices, with $n$ even) which achieves boxicity $\frac{n}{2}$. For normal circular arc graphs, with an NCA model given, we give an additive $2$-factor polynomial time approximation algorithm for the same problem. Note that, proper circular arc graphs form a subclass of NCA graphs and computing an NCA model for them can be done in polynomial time. We also give efficient ways of implementing all these algorithms.\\
\textbf{Circular Arc Graphs:}
Circular Arc (CA) graphs are intersection graphs of arcs on a circle. That is, an arc of the circle is associated with each vertex and two vertices are adjacent if and only if their corresponding arcs overlap. It is sometimes thought of as a generalization of interval graphs which are intersection graphs of intervals on the real line. CA graphs became popular in 1970's with a series of papers from Tucker, wherein he proved matrix characterizations for CA graphs \cite{Tucker1} and structure theorems for some of its important subclasses\cite{Tucker1}. For a detailed description, refer to the survey paper by Lin et.al \cite{Lin09}. Like in the case of interval graphs, linear time recognition algorithms exist for circular arc graphs too \cite{Ross1}. Some of the well known NP-complete problems like tree-width, path-width are known to be polynomial time solvable in the case of CA graphs\cite{Suchan1,Sundaram}. However, unlike interval graphs, problems like minimum vertex coloring \cite{GJ80} and branchwidth \cite{Mazoit06} remain NP-Complete for CA graphs. We believe that boxicity belong to the second category. 

A family $\mathcal{F}$ of subsets of a set $X$ has the Helly property if for every subfamily $\mathcal{F'}$ of  $\mathcal{F}$, with every two sets in $\mathcal{F'}$ pairwise intersecting, we also have $\displaystyle\bigcap_{A \in \mathcal{F'}} A \ne \emptyset$. Similarly, a family $\mathcal{A}$ of arcs satisfy Helly property if every subfamily  $\mathcal{A'}$ $\subseteq$ $\mathcal{A}$ of pairwise intersecting arcs have a common intersection point. The fundamental difficulty while dealing with CA graphs in comparison with interval graphs is the absence of Helly property for a family of circular arcs arising out of their circular adjacencies. 

A Proper Circular Arc (PCA) graph is a graph which has some CA representation in which no arc is properly contained in another. A Unit Circular Arc (UCA) graph is one which has a CA representation in which all arcs are of the same length. A Helly Circular Arc (HCA) graph is one which has a representation satisfying the Helly property. In a CA representation $M$, a pair of arcs are said to be circle cover arcs if they together cover the circumference of the circle. A Normal Circular Arc (NCA) graph is one which has a CA representation in which there are no pairs of circle cover arcs. It is known that UCA $\subsetneq$ PCA $\subsetneq$ NCA and UCA $\nsubseteq$ HCA $\nsubseteq$ NCA.\\
\textbf{Our main results in this paper are:}
\begin{enumerate}
 \item[(a)] Boxicity of any circular arc graph can be approximated within a $(2+\frac{1}{k})$-factor in polynomial time where $k \ge 1$ is the boxicity of the graph.
 \item[(b)] The boxicity of any normal circular arc graph can be approximated within an additive $2$-factor in polynomial time, given a normal circular arc model of the graph.  
 \item[(c)] The time complexity of the algorithms to approximately compute the boxicity is $O(mn+n^2)$ in both the above cases and in $O(mn+kn^2)=O(n^3)$ time we also get their corresponding box representations, where $n$ is the number of vertices of the graph, $m$ its number of edges and $k$ its boxicity.
\end{enumerate}
A structural result we obtained in this paper may be of independent interest. The following way of constructing an auxiliary graph $H^*$ of a given graph $H$ is from \cite{Abu10}. 
\begin{definition} \label{aux}
 Given a graph $H = (V, E)$, consider the graph $H^*$ constructed
as follows: $V(H^*) = E(H)$, and edges $wx$ and $yz$ of $H$ are adjacent in $H^*$ if and only
if $\{w, x, y, z\}$ induces a $2K_2$ in $H$. Notice that $H^*$ is the complement of $[L(H)]^2$, the square of the line graph of $H$.
\end{definition}
 The structural properties of $H^*$ and its complement $[L(H)]^2$ had been extensively investigated for various graph classes in the context of important problems like largest induced matching and  minimum chain cover. The initial results were obtained by Golumbic et.al \cite{Gol2000}. Cameron et.al \cite{Cameron2003} came up with some further results. A consolidation of the related results can be found in  \cite{Cameron2003}. \\\\
The following intermediate structural result in our paper becomes interesting in this context:
\begin{enumerate}
 \item[(d)] In Lemma \ref{lm1}, we observe that if $H$ is a bipartite graph whose complement is a CA graph, then $H^*$ is a comparability graph. 
\end{enumerate}
This is a generalization of similar results for convex bipartite graphs and interval bigraphs already known in literature \cite{Abu10,Yu98}. This observation helps us in reducing the complexity of our polynomial time algorithms.
\section{Preliminaries}
\subsection{Notations}
We denote the vertex set of a given graph by $V(G)$ and edge
set by $E(G)$, with $|V(G)| = n$ and $|E(G)|=m$. We use $e$ to denote $\min(m, nC_2 - m)$. We denote the complement of $G$ by $\overline G$.
We call a graph $G$ the union of graphs $G_1, G_2, \cdots, G_k$ if they are graphs on the same vertex set and $E(G)=E(G_1) \cup E(G_2) \cup \cdots \cup E(G_k)$. Similarly, a graph $G$ is the intersection of graphs $G_1, G_2, \cdots, G_k$ if they are graphs on the same vertex set and $E(G)=E(G_1) \cap E(G_2) \cap \cdots \cap E(G_k)$. We use $box(G)$ to denote boxicity of $G$ and $\chi(G)$ to denote chromatic number of $G$. 

A circular-arc (CA) model $M=(C,\mathcal A)$ consists of a circle $C$, together with a family $\mathcal A$ of arcs of $C$. 
It is assumed that $C$ is always traversed in the clockwise direction, unless stated otherwise. The arc $A_v$ corresponding to a vertex $v$ is denoted by $[s(v), t(v)]$, where $s(v)$ and $t(v)$ are the extreme points of $A_v$ on $C$ with $s(v)$ its start point and $t(v)$ its end point  respectively, in the clockwise direction. Without loss of generality, we assume that no single arc of $\mathcal A$ covers $C$ and no arc is empty or a single point.

An interval model $I$ consists of a family of intervals on real line. An interval $I_v$ corresponding to a vertex $v$ is denoted by a pair $\bigl[l_v(I), r_v(I)\bigr]$, where $l_v(I)$ and $r_v(I)$ are the left and right end points of the interval $I_v$. Without loss of generality, we assume that an interval is always non-empty and is not a single point. We may use $I$ to represent both an interval graph and its interval model, when the meaning is clear from the context. 
\begin{definition}[Bi-Consecutive Adjacency Property]
 Let the vertex set $V(G)$ of a graph $G$ be partitioned into two sets $A$ and $B$ with $|A|=n_1$ and $|B|=n_2$. A numbering scheme where vertices of $A$ are numbered as $1, 2, \cdots, n_1$ and vertices of $B$ are numbered as  $1', 2', \cdots, n_2'$ satisfy Bi-Consecutive Adjacency Property if the following condition holds: \\ For any  $i \in A$ and $j' \in B$,  if $i$ is adjacent to $j'$, then  either \\(a) $j'$ is adjacent to all $k$ such that $1\le k \le i$  or \\(b) $i$ is adjacent to all $k'$ such that $1\le k' \le j'$. 
\end{definition}
\subsection{A Vertex Numbering Scheme for Circular Arc Graphs}\label{Num}
Let $G$ be a CA graph. Assume a CA model $M=(C,\mathcal A)$ of $G$ is given. Let $p$ be any point on the circle $C$. We define a numbering scheme for the vertices of $G$ denoted by $NS(M, p)$ which will be helpful for us in explaining further results.

Let $A$ be the clique corresponding to the arcs passing through $p$ and let $B = V \setminus A$. Let $|A| = n_1$ and $|B| = n_2$. Number the vertices in $A$ as $1, 2, \cdots, n_1$ such that the vertex $v$ with its $t(v)$ farthest (in the clockwise direction) from $p$  gets number $1$ and so on. Similarly, number the vertices in $B$ as  $1', 2', \cdots, n_2'$ such that the vertex $v'$ with its $t(v')$ farthest (in the clockwise direction) from $p$  gets number $1'$ and so on. In both cases, break ties (if any) between vertices arbitrarily, while assigning numbers. See Figure \ref{Fig1} for an illustration of the numbering scheme.
\begin{figure} 
\begin{center}
{\input{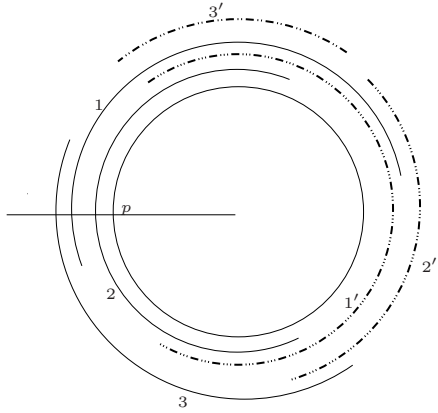}}
\end{center}
\caption{Example for Numbering of vertices of a CA graph}
\label{Fig1}
\end{figure}
Now, observe that in $G$, if a vertex $i \in A$ is adjacent to a vertex $j' \in B$, then at least one of the following is true: 
(a) the point $t(i)$ is contained in the arc $[s(j'), t(j')]$ or (b) the point $t(j')$ is contained in the arc $[s(i), t(i)]$. This implies that if $i \in A$ is adjacent to $j' \in B$, then  either (a) $j'$ is adjacent to all $k$ such that $1\le k \le i$  or (b) $i$ is adjacent to all $k'$ such that $1\le k' \le j'$. Thus we have the following lemma.
\begin{lemma} \label{prop1}
  Given a circular arc graph $G$ and a CA model $M(C, \mathcal{A})$ of $G$, together with a point $p$ on the circle $C$, let $A$ and $B$ be as described above.
\begin{enumerate}
 \item The numbering scheme $NS(M, p)$ of $G$ defined above satisfy the Bi-Consecutive Adjacency Property.  
 \item $NS(M, p)$ can be computed in $O(n^2)$ time.
\end{enumerate}
\end{lemma}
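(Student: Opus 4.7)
The plan is to establish part 1 (the Bi-Consecutive Adjacency Property) by combining the geometric dichotomy sketched in the paragraph before the lemma with a careful sub-case analysis, and to handle part 2 by a direct complexity count. First I would parameterize the circle so that $p$ is at $0$ and all points lie in $[0,L)$ with $L$ the circumference, measuring positions by clockwise distance from $p$. Write $\tau_v$ and $\sigma_v$ for the clockwise positions of $t(v)$ and $s(v)$. Then for $v\in A$, $\tau_v<\sigma_v$ and $A_v=[0,\tau_v]\cup[\sigma_v,L)$; for $v\in B$, $\sigma_v<\tau_v$ and $A_v=[\sigma_v,\tau_v]$. The numbering rule becomes: in both $A$ and $B$, a smaller index corresponds to a larger $\tau$-value.

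For any $i\in A$ and $j'\in B$ with $i\sim j'$, the excerpt already notes that either $t(i)\in A_{j'}$ (case (a)) or $t(j')\in A_i$ (case (b)). In case (a), for any $k\le i$ in $A$ we have $\tau_k\ge\tau_i$, so $A_k\supseteq[0,\tau_k]\ni t(i)$, and combined with $t(i)\in A_{j'}$ we get $k\sim j'$ --- conclusion (a). Case (b) splits by which piece of $A_i$ contains $t(j')$. In sub-case (b.i), $\tau_{j'}\le\tau_i$; then for $k\le i$ in $A$, $\tau_k\ge\tau_i\ge\tau_{j'}$, so $t(j')\in[0,\tau_k]\subseteq A_k$, and with $t(j')\in A_{j'}$ this gives $k\sim j'$ --- again conclusion (a). In sub-case (b.ii), $\tau_{j'}\ge\sigma_i$; then for any $k'\le j'$ in $B$, $\tau_{k'}\ge\tau_{j'}\ge\sigma_i$, so $t(k')\in[\sigma_i,L)\subseteq A_i$, giving $i\sim k'$ --- conclusion (b).

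The main obstacle is case (b): the plausible simplification ``$t(j')\in A_i$ directly forces conclusion (b)'' is actually false, since in sub-case (b.i) one can have $k'\le j'$ in $B$ whose arc lies entirely in the gap $(\tau_i,\sigma_i)$ and is therefore disjoint from $A_i$. The key rescue is that (b.i) also forces conclusion (a) by the same type of containment argument used in case (a), so one of the two conclusions always holds. Part 2 follows from a routine implementation: identifying $A$ and $B$ via containment of $p$ takes $O(n)$, reading the $\tau_v$ values off the model takes $O(n)$, and sorting each of $A$ and $B$ by $\tau$ takes $O(n\log n)$, well within the claimed $O(n^2)$ bound.
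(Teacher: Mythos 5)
Your proof is correct and takes essentially the same route as the paper: the paper's entire justification is the endpoint dichotomy ($t(i)\in A_{j'}$ or $t(j')\in A_i$) stated just before the lemma, followed by the implication you spell out, and your sub-case analysis of case (b) --- in particular noting that sub-case (b.i) yields conclusion (a) rather than (b) --- correctly fills in the step the paper leaves implicit. Your accounting for part 2 is also fine (indeed it gives $O(n\log n)$ after the $O(n)$ partition, well within the claimed $O(n^2)$).
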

Using Lemma \ref{prop1}, we can prove the following in the case of co-bipartite CA graphs.
 \begin{lemma} \label{CAclique}
 If $G(V, E)$ is a co-bipartite CA graph, then we can find a partition $A\cup B$ of $V$ where $A$ and $B$ induce cliques, having a numbering scheme of the vertices of $A$ and $B$ with $A=\{1, 2, \cdots, n_1\}$ and $B=\{1', 2', \cdots, n_2'\}$ such that it satisfies Bi-Consecutive Adjacency Property. Moreover, the numbering scheme can be found in $O(n^2)$ time. 
\end{lemma}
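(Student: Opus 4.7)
The plan is to reduce the problem to Lemma \ref{prop1} by choosing the reference point $p$ on the circle carefully. First, I would compute a CA model $M = (C, \mathcal{A})$ of $G$, which is available in $O(n+m)$ time. For any point $p \in C$, Lemma \ref{prop1} produces a numbering $NS(M, p)$ satisfying the Bi-Consecutive Adjacency Property for the partition $(A(p), B(p))$, where $A(p) = \{v : p \in A_v\}$. The set $A(p)$ is always a clique in $G$ (its arcs all meet at $p$), so the remaining task is to arrange that $B(p) = V \setminus A(p)$ is also a clique; equivalently, that for every non-edge $\{u, v\}$ of $G$ at least one of $A_u, A_v$ contains $p$.

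To secure such a $p$, I would fix the partition $V = K_1 \cup K_2$ into cliques coming from the bipartition of $\overline{G}$. The arcs $\{A_v : v \in K_1\}$ pairwise intersect, so by the Helly-type property for circular arcs, either they share a common point or three of them together cover $C$. In the first subcase, take $p$ in their common intersection; then $A(p) \supseteq K_1$ and, since $A(p)$ is itself a clique, $B(p) \subseteq K_2$ is a subset of a clique, hence a clique. The symmetric argument handles the case when the arcs of $K_2$ share a common point. The delicate subcase is when neither family of arcs has a common point, so both cliques' arcs cover the circle. In this setting I would still exhibit a point $p$ such that $A(p)$ lies in one maximal clique of $G$ and $B(p)$ in another, exploiting the bipartite structure of $\overline{G}$: such a pair $(A(p), B(p))$ corresponds to an alternative 2-coloring of $\overline{G}$ obtained by flipping some of its connected components, not to the given $(K_1, K_2)$.

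The main obstacle lies in this last subcase: when both $K_1$- and $K_2$-arcs wrap around the circle, one must carefully identify a cell on $C$ whose associated $(A(p), B(p))$ is a partition into two cliques, and the argument must exploit how non-adjacent pairs of arcs force their complementary ``gaps'' on the circle to be small enough that the union of all such gaps cannot cover $C$. For the running time, the $2n$ arc endpoints cut $C$ into at most $2n$ cells on which $A(p)$ is constant; scanning these cells and checking whether $B(p)$ is a clique by maintaining an incremental view of the non-edges violated can be arranged in $O(n^2)$ total time, after which producing $NS(M, p)$ from Lemma \ref{prop1} costs another $O(n^2)$, giving the claimed $O(n^2)$ overall bound.
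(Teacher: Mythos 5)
Your reduction to Lemma \ref{prop1} is the right frame, and it matches the paper's strategy: pick a point $p$ so that the arcs through $p$ form one clique $A$ and $V \setminus A$ is also a clique, then invoke $NS(M,p)$. The difference is in how the existence of such a point is established, and this is where your proposal has a genuine gap. You handle only the easy subcases, where the arcs of $K_1$ (or of $K_2$) have a common point, via the Helly-type dichotomy for pairwise intersecting arcs. The remaining subcase -- both families of arcs wrap around the circle with no common point -- is exactly the non-trivial content of the lemma, and there you offer only an intention (``exhibit a point $p$ \ldots exploiting the bipartite structure of $\overline{G}$,'' ``the union of all such gaps cannot cover $C$'') rather than an argument; you even name it yourself as the main obstacle. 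The paper does not prove this from scratch either: it invokes the known structural fact (Tucker \cite{Tucker2}, see also \cite{Lin09}) that in \emph{any} circular arc model of a co-bipartite CA graph there are two points $p_1,p_2$ such that every arc passes through at least one of them. With that fact, $A = A(p_1)$ is a clique, every arc of $B = V\setminus A$ passes through $p_2$ so $B$ is a clique (note this partition may differ from $(K_1,K_2)$, consistent with your ``component-flipping'' remark), and Lemma \ref{prop1} finishes the proof. Without either citing this result or actually proving it, your case analysis does not close.

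Two smaller points. First, your claim that $G$ has a CA model computable in $O(n+m)$ time and your $O(n^2)$ cell-scanning procedure for locating a suitable point $p$ (maintaining the count of non-edges inside $B(p)$ as arcs enter and leave $A(p)$) are fine as algorithmic devices, but the scan is only guaranteed to succeed once the existence statement above is in place. Second, in your easy subcase the conclusion $B(p)\subseteq K_2$ and hence $B(p)$ is a clique is correct, so that part of the argument stands; it is solely the wrap-around subcase that must be supplied, either by proving the two-point covering property or by citing it as the paper does.
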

For a proof of this lemma, refer to Appendix \ref{appendix1}.\\\\
The following lemma is applicable in the case of co-bipartite graphs:
\begin{lemma} \label{lmprop2}
Let $G$ be a co-bipartite graph with a partitioning of vertex set into cliques $A$ and $B$ with $|A| = n_1$ and $|B|=n_2$. Suppose there exist a numbering scheme of vertices of $G$ which satisfies the Bi-Consecutive Adjacency Property. Then $G$ is a CA graph.
\end{lemma}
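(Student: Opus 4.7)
The plan is to construct an explicit CA model. Since $A$ and $B$ are cliques, I would make all arcs for vertices of $A$ pass through one common point $p$ on the circle and all arcs for vertices of $B$ pass through another point $q$, so that the intra-$A$ and intra-$B$ adjacencies are automatic. The real work is placing the remaining endpoints so that an $A$-arc and a $B$-arc intersect exactly when the Bi-Consecutive Adjacency Property forces them to.

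To encode the adjacency data, I would associate two monotone functions. For each $j' \in B$, let $f(j') \in \{0, 1, \ldots, n_1\}$ be the largest $i$ such that $j'$ is adjacent to every vertex in $\{1, 2, \ldots, i\}$ (with $f(j') = 0$ if $j' \not\sim 1$); symmetrically, for each $i \in A$ let $g(i)$ be the largest $j$ such that $i$ is adjacent to every vertex in $\{1', 2', \ldots, j'\}$. A short argument using the hypothesis gives the key reformulation: $i \sim j'$ in $G$ if and only if $i \leq f(j')$ or $j \leq g(i)$. The forward direction is exactly the dichotomy in the Bi-Consecutive Adjacency Property; the backward direction is immediate from the maximality in the definitions of $f$ and $g$.

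The explicit model uses a circle of circumference $L = 2n_1 + 2n_2 + 4$, parameterized clockwise by $[0, L)$, with $p$ at position $0$ and $q$ at position $2n_1 + 2$. For each $i \in A$ set $t(i) = 2(n_1 - i + 1)$ and $s(i) = q + 2(n_2 - g(i)) + 1$; for each $j' \in B$ set $t(j') = q + 2(n_2 - j + 1)$ and $s(j') = 2(n_1 - f(j')) + 1$, using arbitrarily small perturbations to separate endpoints that share a nominal position. Under these choices, the arc $A_i = [s(i), t(i)]$ (traversed clockwise) wraps through $p$ but not $q$, and similarly every $B$-arc wraps through $q$ but not $p$; hence the $A$-arcs meet pairwise at $p$ and the $B$-arcs meet pairwise at $q$, realizing the two cliques.

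It remains to check that for $i \in A$ and $j' \in B$, the arcs $A_i$ and $A_{j'}$ intersect iff $i \sim j'$. Because $A_i$ misses $q$ and $A_{j'}$ misses $p$, the two arcs can meet only in the open clockwise arc from $p$ to $q$ or in the open clockwise arc from $q$ to $p$. In the first region $A_i$ contributes $(0, t(i))$ and $A_{j'}$ contributes $(s(j'), q)$, so they meet there iff $s(j') < t(i)$, which by direct substitution becomes $i \leq f(j')$. Symmetrically, they meet in the other region iff $j \leq g(i)$. Combining these with the reformulation in the previous paragraph yields the required adjacency equivalence. The main technical obstacle is simply arranging the endpoint positions so that the combinatorial inequalities on $f$ and $g$ translate cleanly into strict geometric inequalities on the circle; interleaving even positions for the $t$-endpoints with odd positions for the $s$-endpoints (together with the tie-breaking perturbations) handles this.
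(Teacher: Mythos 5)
Your proposal is correct and is essentially the paper's own proof: both construct an explicit CA model with the $A$-arcs anchored at one point and the $B$-arcs at another, then extend each arc's start point into the other side's region according to the largest prefix of consecutive neighbors (your $f$ and $g$ are exactly the "highest numbered vertex adjacent to all smaller ones" used in the paper's Step 2). Your version just makes the coordinates and the iff-verification of cross adjacencies more explicit than the paper does.
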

  The proof is by construction of a CA model $M=(C,\mathcal{A})$ for $G$. Refer to Appendix \ref{appendix1} for the proof. 
\section{Computing the Boxicity of Co-bipartite CA Graphs in Polynomial Time} \label{sspoly}
Using some theorems in the literature, in this section we infer that computing boxicity of co-bipartite CA graphs can be done in polynomial time. A bipartite graph is chordal bipartite if it does not contain any induced cycle of length $\ge 6$.
\begin{theorem}[Feder, Hell and Huang 1999 \cite{Feder99}] \label{th1}
  A graph $G$ is a co-bipartite CA graph if and only if its complement is chordal bipartite and contains no edge-asteroids.
\end{theorem}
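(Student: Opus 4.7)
The plan is to establish the two directions of this characterization separately, using the Bi-Consecutive Adjacency Property developed in Section~\ref{Num} as the main bridge between CA representations and the structure of the bipartite complement.

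For the forward direction ($\Rightarrow$), suppose $G$ is a co-bipartite CA graph. Then $\bar G$ is automatically bipartite with parts $A$ and $B$, and by Lemma~\ref{CAclique} there is a numbering of the vertices of $A$ and $B$ satisfying the Bi-Consecutive Adjacency Property. I would first use this to rule out induced cycles of length at least $6$ in $\bar G$: given an induced cycle $v_1 v_2 \cdots v_{2\ell}$ alternating between $A$ and $B$, successively applying the Bi-Consecutive Adjacency Property to consecutive edges of $\bar G$ along the cycle forces at least one chord in the cycle, contradicting the assumption that it is induced. To rule out edge-asteroids, I would exploit the circular geometry of the CA model $M$ directly: an edge-asteroid in $\bar G$ would correspond to a family of pairwise non-intersecting arcs in $M$ joined by connecting paths (sequences of arcs) that avoid designated ``opposite'' arcs, and a planarity/winding argument on the disk bounded by $C$ would show that such a configuration cannot be arranged around the circle.

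For the reverse direction ($\Leftarrow$), suppose $\bar G$ is chordal bipartite and contains no edge-asteroids. I would build a CA model of $G$ constructively. First I would invoke the chordal bipartite property to obtain a doubly lexical (or ``strong'') ordering of the two color classes of $\bar G$, which is known to exist and be polynomial time computable for chordal bipartite graphs. Next, I would use the no-edge-asteroid hypothesis to show that this ordering can be ``wrapped around'' a circle — intuitively, edge-asteroids are exactly the obstructions that prevent such a wrap-around from being done consistently — producing a two-part numbering satisfying the Bi-Consecutive Adjacency Property. Finally, Lemma~\ref{lmprop2} would convert this numbering into an explicit CA model of $G$, completing the construction.

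The main obstacle will be in making the topological correspondence between edge-asteroids and CA obstructions precise: one must show that the absence of edge-asteroids is exactly the extra condition beyond chordal bipartiteness needed for a linear ordering of $\bar G$ to close up consistently into a circular one. This mirrors the familiar role of asteroidal triples as the sole obstruction beyond chordality to an interval representation, and will require a careful analysis of how alternating paths in $\bar G$ translate to forced cyclic constraints on the arrangement of arc endpoints in $M$.
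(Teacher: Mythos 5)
Note first that the paper itself contains no proof of Theorem~\ref{th1}: it is imported verbatim from Feder, Hell and Huang \cite{Feder99} and used as a black box, so there is no in-paper argument to compare yours against. Judged on its own terms, your sketch has genuine gaps in both directions. In the forward direction, the exclusion of induced cycles of length at least $6$ via the Bi-Consecutive Adjacency Property is plausible and can indeed be completed by a short case analysis (for a $C_6$ in $\overline G$, ordering the three $A$-vertices and applying the property to the two $G$-edges incident to the largest and middle ones already yields a contradiction), but the exclusion of edge-asteroids is only asserted. You never unpack the definition (edges $e_0,\dots,e_{2k}$ of $\overline G$ such that for each $i$ there is a path joining $e_i$ and $e_{i+1}$ avoiding the neighbourhood of $e_{i+k+1}$), nor do you explain how a path in $\overline G$ that avoids a neighbourhood constrains the placement of arcs in the model; ``a planarity/winding argument on the disk bounded by $C$'' is a promise, not an argument.

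The reverse direction is where the real content of the theorem lies, and there your plan is circular at exactly the crucial point. By Lemmas~\ref{CAclique} and~\ref{lmprop2}, a co-bipartite graph is a CA graph if and only if its clique bipartition admits a numbering with the Bi-Consecutive Adjacency Property, so what must actually be proved is that chordal bipartiteness of $\overline G$ together with the absence of edge-asteroids forces such a numbering. Declaring that ``edge-asteroids are exactly the obstructions that prevent such a wrap-around from being done consistently'' is a restatement of the theorem to be proved, not a step toward it, and there is no reason offered why a doubly lexical (or $\Gamma$-free) ordering of a chordal bipartite graph should wrap around a circle at all; the no-edge-asteroid hypothesis has to enter in an essential, constructive way, and this is precisely where Feder, Hell and Huang expend most of their effort. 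Your closing paragraph concedes that this correspondence is the ``main obstacle,'' which confirms that what you have is a proof plan with its central step missing rather than a proof. If you only need the result for the algorithms in this paper, the correct move is the paper's: cite \cite{Feder99} (in fact only the easy consequence that $\overline G$ is chordal bipartite is used later, via Theorem~\ref{th2}).
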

A bipartite graph is called a chain graph if it does not contain any induced $2K_2$. The minimum chain cover number of $G$, denoted by $ch(G)$, is the minimum number of chain subgraphs of $G$ such that the union of their edge sets is $E(G)$. \\\\Recall Definition \ref{aux} of $H^*$ from Section \ref{intro}.
\begin{theorem} [Abueida, Busch and Sritharan 2010 \cite{Abu10}] \label{th2}
If $H$ is a bipartite graph with no induced cycles on exactly
6 vertices, then 
\begin{enumerate} 
 \item $ch(H)$ = $\chi(H^*)$.
 \item Every maximal independent set of $H^*$ corresponds to the edge-set of a chain subgraph of $H$. Moreover, the family of maximal independent sets obtained by extending the the color classes of the optimum coloring of $H^*$ corresponds to a minimum chain cover of $H$.
\item In the more restricted case where $H$ is chordal bipartite, $H^*$ is a perfect graph and therefore, $ch(H)$ and a chain cover of $H$ of minimum cardinality can be computed in polynomial time, in view of 1 and 2 above.
\end{enumerate}
\end{theorem}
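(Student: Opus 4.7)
The plan is to reduce claims~1 and~2 to an exact correspondence between chain subgraphs of $H$ and independent sets in $H^*$, and to invoke a perfectness argument for claim~3.

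First I would establish the easy direction: for any chain subgraph $C$ of $H$, the edge set $E(C)$ is independent in $H^*$. Indeed, were two edges $e,f \in E(C)$ with endpoints $\{w,x,y,z\}$ adjacent in $H^*$, then $\{w,x,y,z\}$ would induce a $2K_2$ in $H$; since $C$ has the same vertex set as $H$ and contains both $e$ and $f$, while all other potential edges among $\{w,x,y,z\}$ are absent from $H$ and thus from $C$, the same four vertices would induce a $2K_2$ in $C$, contradicting that $C$ is a chain graph. This already gives $\chi(H^*)\le ch(H)$: from a chain cover of size $k$, assign each edge of $H$ to one chain subgraph containing it and read this assignment as a proper $k$-colouring of $H^*$.

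The harder direction, where the no-induced-$6$-cycle hypothesis must be used, is the Key Lemma: every maximal independent set $I$ of $H^*$ is the edge set of a chain subgraph of $H$. Suppose for contradiction that $(V(H), I)$ contains an induced $2K_2$ on $\{w,x,y,z\}$ with $wx, yz \in I$. Since $wx$ and $yz$ are non-adjacent in $H^*$, $H$ must contain an extra edge among these four vertices; by bipartiteness we may take $wz \in E(H)\setminus I$. Maximality of $I$ then forces some $pq \in I$ with $\{w,z,p,q\}$ inducing a $2K_2$ in $H$, and repeated maximality calls on further non-edges of $I$ (such as potential connections between $\{p,q\}$ and $\{x,y\}$, or to new auxiliary vertices) constrain the local adjacency structure. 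The aim of this case analysis is to exhibit an induced $6$-cycle in $H$ through $\{w,x,y,z,p,q\}$ together with any auxiliary vertices produced, contradicting the hypothesis. This case analysis is the main technical obstacle of the proof.

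Once the Key Lemma is in hand, claims~1 and~2 follow uniformly: for any optimum colouring of $H^*$, extend each colour class to a maximal independent set and convert each to a chain subgraph via the lemma; these chain subgraphs cover $E(H)$ because the colour classes already partition it, giving $ch(H)\le \chi(H^*)$ and hence $ch(H)=\chi(H^*)$. For claim~3, I would show $H^*$ is perfect whenever $H$ is chordal bipartite, most cleanly via the Strong Perfect Graph Theorem: an assumed odd hole or odd antihole of length at least five in $H^*$ translates into a collection of edges of $H$ with prescribed $2K_2$ relations among them, and a careful argument should show that any such configuration forces an induced cycle of length at least six in $H$, contradicting chordal bipartiteness. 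Given perfectness of $H^*$, $\chi(H^*)$ and an optimum colouring are computable in polynomial time by standard perfect-graph algorithms, and via claims~1 and~2 this yields a polynomial-time minimum chain cover of $H$.
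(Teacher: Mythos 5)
This statement is quoted in the paper as a known result from \cite{Abu10}; the paper gives no proof of it, so your attempt has to stand on its own. Its overall skeleton is the right one (and is essentially how the cited work proceeds): the easy direction --- edge sets of chain subgraphs are independent in $H^*$, hence $\chi(H^*)\le ch(H)$ --- is correct as you wrote it, and the derivation of claims~1 and~2 from your ``Key Lemma'' is fine. But the two places where the hypotheses of the theorem actually do work are exactly the places you leave as intentions rather than proofs, and these are the entire content of the theorem. For the Key Lemma you stop at ``repeated maximality calls \dots{} constrain the local adjacency structure'' and ``the aim of this case analysis is to exhibit an induced $6$-cycle.'' That case analysis is not a routine verification: after the first maximality call producing $pq\in I$ with $\{w,z,p,q\}$ inducing a $2K_2$, one must use the independence of $I$ itself (non-adjacency of $pq$ with $wx$ and with $yz$ in $H^*$) to force edges such as $px$ and $qy$, check that the resulting $6$-cycle $w\,x\,p\,q\,y\,z$ has no chord except possibly $xy$, and in the case $xy\in E(H)$ make a second maximality call on $xy$ and split again on which of the two edges forced by independence of the two new pairs in $I$ is present, each branch yielding a different chordless $6$-cycle. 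None of this is present, so as written the central claim ``maximal independent set of $H^*$ $\Rightarrow$ chain subgraph'' is asserted, not proved.

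The same problem occurs in claim~3. Saying that an odd hole or odd antihole in $H^*$ ``translates into a collection of edges of $H$ with prescribed $2K_2$ relations'' and that ``a careful argument should show'' it forces a long induced cycle in $H$ is a hope, not an argument; handling antiholes of $H^*$ (i.e., structures in $[L(H)]^2$) is the genuinely delicate part, and the known proofs do not go through the Strong Perfect Graph Theorem but show directly that $H^*$ is weakly chordal when $H$ is chordal bipartite (Cameron--Sritharan--Tang style arguments), from which perfectness and polynomial-time optimal coloring follow. So your proposal is a correct plan with the two pivotal lemmas missing; to count as a proof it needs the explicit $C_6$-extraction case analysis for the Key Lemma and a real argument (hole \emph{and} antihole) for the perfectness of $H^*$.
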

\begin{theorem}[Yannakakis 1982 \cite{Yan1}] \label{th3} 
 Let $G$ be the complement of a bipartite graph $H$. Then, $box(G)= ch(H)$. Further, if $H_1, H_2, \cdots, H_k$ are chain subgraphs whose union is $H$, their  respective complements $G_1, G_2, \cdots, G_k$ are interval supergraphs of $G$ whose intersection is $G$. 
\end{theorem}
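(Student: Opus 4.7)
The plan is to prove $box(G) = ch(H)$ by establishing both inequalities, and to get the ``further'' claim for free while proving $\le$. The hinge is the classical equivalence: for a bipartite graph $H'$ with bipartition $A \cup B$, $H'$ has no induced $2K_2$ (i.e., is a chain graph) if and only if the complement $\overline{H'}$, taken on the vertex set $A \cup B$, is an interval graph. Given this lemma, both inequalities are essentially set-algebra.

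For $box(G) \le ch(H)$ and the ``further'' claim, let $H_1, \ldots, H_k$ be a minimum chain cover of $H$ with $k = ch(H)$, and set $G_i := \overline{H_i}$. Since each $H_i$ is bipartite with the bipartition $A \cup B$ inherited from $H$ and is a chain graph, the lemma gives that each $G_i$ is an interval graph. To check $G = \bigcap_i G_i$: as $E(H_i) \subseteq E(H) \subseteq A \times B$, every non-crossing pair (inside $A$ or inside $B$) is an edge of every $G_i$ and of $G$; for a crossing pair $ab$, $ab \in \bigcap_i E(G_i)$ iff $ab \notin \bigcup_i E(H_i) = E(H)$ iff $ab \in E(G)$. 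Thus each $G_i$ is an interval supergraph of $G$ and $\bigcap_i G_i = G$, establishing both the inequality and the ``further'' statement.

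For the reverse $box(G) \ge ch(H)$, pick an optimal representation $G = I_1 \cap \cdots \cap I_k$ with $k = box(G)$. Since $E(G) \subseteq E(I_j)$, we have $E(\overline{I_j}) \subseteq E(\overline{G}) = E(H)$, so each $\overline{I_j}$ is a bipartite subgraph of $H$ with the same parts $A, B$. An induced $2K_2$ in $\overline{I_j}$ would produce an induced $C_4$ in $I_j$, contradicting chordality of the interval graph $I_j$; hence every $\overline{I_j}$ is a chain graph. De Morgan gives $\bigcup_j E(\overline{I_j}) = E(\overline{\bigcap_j I_j}) = E(H)$, so the $\overline{I_j}$'s form a chain cover of $H$ of size $k$, yielding $ch(H) \le k$. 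The only non-trivial ingredient is the chain-vs-co-interval lemma; I would expect the ``chain $\Rightarrow$ co-interval'' direction to be the main work (order one side so that neighborhoods are nested, then build intervals $[0, r_a]$ for $a \in A$ and $[\ell_b, L]$ for $b \in B$ with endpoints chosen so that the $A$- and $B$-intervals cross exactly when $ab \notin E(H_i)$, while making $A$ and $B$ each a clique in the model), but it is standard and the rest of the argument is bookkeeping.
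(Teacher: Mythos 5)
Your proof is correct, but note that the paper does not prove this statement at all: it is imported verbatim from Yannakakis (1982) as a known result (Theorem~\ref{th3}) and used as a black box, so there is no internal proof to compare against. Your argument is the standard self-contained derivation and it holds up. The hinge lemma (a bipartite graph on $A\cup B$ is $2K_2$-free iff its complement, with $A$ and $B$ turned into cliques, is an interval graph) is true, and your sketch of the hard direction is the right construction: order one side so that neighborhoods are nested, realize the $A$-vertices as intervals anchored at the left end and the $B$-vertices as intervals anchored at the right end, with the crossing pattern encoding the bipartite complement. The $\le$ direction plus the ``further'' claim then follow by the edge-set bookkeeping you give (and, as you implicitly use, minimality of the chain cover is irrelevant there, so the ``further'' statement holds for an arbitrary chain cover, which is what the theorem asserts). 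The $\ge$ direction is also sound: an induced $2K_2$ in $\overline{I_j}$ with both edges crossing between $A$ and $B$ yields an induced $C_4$ in the interval graph $I_j$, contradicting chordality, and De Morgan turns the box representation into a chain cover of $H$ of the same size. One could quibble that you state the $\le$ argument starting from a \emph{minimum} chain cover while the ``further'' claim quantifies over all chain covers, but since your argument never uses minimality this is purely presentational.
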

By Theorem \ref{th1}, if $G= \overline H$ is a co-bipartite CA graph, then $H$ is chordal bipartite. Hence by Theorem \ref{th2}, a chain cover of $H$ of minimum cardinality can be computed in polynomial time and $ch(H) = \chi(H^*)$. Combining with Theorem \ref{th3}, we get :
\begin{theorem}\label{th4}
 If $G$ is a co-bipartite CA graph, then $box(G)=\chi(H^*)$ and the family of maximal independent sets obtained by extending the color classes of an optimum coloring of $H^*$ corresponds to the complements of interval supergraphs in an optimal box representation of $G$. Moreover, $box(G)$ and an optimal box representation of $G$ are computable in polynomial time.
\end{theorem}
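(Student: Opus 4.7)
The plan is to chain together the three theorems just stated. Set $H = \overline{G}$. Since $G$ is a co-bipartite CA graph, Theorem \ref{th1} immediately gives that $H$ is chordal bipartite (and has no edge-asteroids, though we will not need the latter). Chordal bipartiteness forbids induced cycles of length $\ge 6$, in particular induced $C_6$, so the hypothesis of Theorem \ref{th2} is satisfied. That theorem yields $ch(H) = \chi(H^*)$, tells us that $H^*$ is perfect (because $H$ is actually chordal bipartite, not merely $C_6$-free), and gives the explicit correspondence: extending the color classes of an optimum coloring of $H^*$ to maximal independent sets produces a minimum chain cover $H_1, \dots, H_k$ of $H$ with $k = \chi(H^*)$.

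Next I would apply Theorem \ref{th3} (Yannakakis) as the bridge from chain covers to boxicity. It gives $box(G) = ch(H)$, which combined with the previous step yields $box(G) = \chi(H^*)$. More importantly, Theorem \ref{th3} converts the minimum chain cover $H_1, \dots, H_k$ into interval supergraphs $G_1, \dots, G_k$ (via $G_i = \overline{H_i}$) whose intersection is exactly $G$. Composing the two correspondences, every maximal independent set of $H^*$ obtained from extending a color class yields a chain subgraph of $H$, whose complement is one of the interval supergraphs in an optimal box representation of $G$, which is the stated structural conclusion.

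For the algorithmic claim, I would argue as follows. Given $G$, compute $H = \overline{G}$ and then $H^*$ (of size $O(m^2)$); Theorem \ref{th2}(3) guarantees that $H^*$ is perfect, so both $\chi(H^*)$ and an optimum coloring can be computed in polynomial time by standard algorithms for perfect graphs. Extending each color class to a maximal independent set of $H^*$ and then complementing the associated chain subgraphs yields an optimal box representation of $G$ in polynomial time.

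I do not anticipate a serious obstacle: the proof is essentially a bookkeeping argument that composes Theorems \ref{th1}--\ref{th3}. The only point requiring a moment of care is verifying that the interval supergraphs produced from the extended color classes of $H^*$ really form a \emph{minimum} cardinality box representation and not merely \emph{some} box representation; this is exactly what the combination of Theorem \ref{th2}(2) and the constructive second sentence of Theorem \ref{th3} guarantees, since the number of parts equals $ch(H) = \chi(H^*) = box(G)$.
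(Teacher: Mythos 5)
Your proposal is correct and follows essentially the same route as the paper: it derives chordal bipartiteness of $H$ from Theorem \ref{th1}, applies Theorem \ref{th2} to get $ch(H)=\chi(H^*)$, perfection of $H^*$, and the chain cover from extended color classes, and then uses Theorem \ref{th3} to translate the minimum chain cover into an optimal box representation, exactly as the paper does.
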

\section{Reducing the Time Complexity of Computing the Boxicity of Co-bipartite CA Graphs} \label{comp}
 Let $t$ be the number edges of $H$ or equivalently, the number of vertices in $H^*$. By Theorem \ref{th2}, when $H$ is a chordal bipartite graph, $H^*$ is a perfect graph. Using the standard perfect graph coloring methods, an $O(t^3)$ algorithm is given in \cite{Abu10} to compute $\chi(H^*)$. In $O(t^3)$ time, they also compute a chain cover of minimum cardinality. However, $O(t^3)$ can be as bad as $O(n^6)$
in the worst case, where $n$ is the number of vertices of $G$. In \cite{Abu10}, for the restricted case when $H$ is an interval bigraph, they succeeded in reducing the complexity to $O(tn)$, using the zero partitioning property of the adjacency matrix of interval bigraphs. Unfortunately, zero partitioning property being the defining property of interval bigraphs, we cannot use the method used in \cite{Abu10} in our case because of the following result by Hell and Huang \cite{Hell2004}: A graph $H$ is an interval bigraph if and only if its complement is a co-bipartite CA graph admitting a normal CA model. Since there are co-bipartite CA graphs which do not permit a normal CA model, the complements of CA co-bipartite graphs form a strict super class of interval bigraphs. Hence to bring down the complexity of the algorithm from $O(t^3)$, we have to go for a new method. The key ingredient of our method is the following generalization of the results in \cite{Abu10,Yu98}.
\begin{lemma}\label{lm1}
   If the complement of graph $H$ is a co-bipartite CA graph, then $H^*$ is a comparability graph.
\end{lemma}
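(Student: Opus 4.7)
The plan is to build an explicit transitive orientation of $H^*$, using the combinatorial handle provided by Lemma \ref{CAclique}. Since $\overline{H}$ is a co-bipartite CA graph, Lemma \ref{CAclique} gives a partition $V(H) = A \cup B$ with $A, B$ cliques of $\overline{H}$ (hence independent sets of $H$) and a numbering $A = \{1, \ldots, n_1\}$, $B = \{1', \ldots, n_2'\}$ satisfying the Bi-Consecutive Adjacency Property in $\overline{H}$. Translating this property to $H$, where the adjacencies between $A$ and $B$ are complemented, I get: whenever $(i, j')$ is a \emph{non-edge} of $H$, either (a) $(k, j')$ is a non-edge of $H$ for every $1 \le k \le i$, or (b) $(i, k')$ is a non-edge of $H$ for every $1 \le k' \le j'$. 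Every edge of $H$ has the form $(a, b')$ with $a \in A$ and $b' \in B$.

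My orientation rule for $H^*$ is simple: if $e_1 = (a_1, b_1')$ and $e_2 = (a_2, b_2')$ are $H^*$-adjacent, equivalently induce a $2K_2$ in $H$, orient $e_1 \to e_2$ when $a_1 < a_2$. First I would check that this is consistent, i.e., that any $2K_2$ with $a_1 < a_2$ also satisfies $b_1' < b_2'$. Applying the translated property to the non-edge $(a_2, b_1')$: option (a) would force $(a_1, b_1')$ to be a non-edge, contradicting that it is an edge; so option (b) must hold, and then $b_2' \le b_1'$ would force $(a_2, b_2')$ to be a non-edge, again a contradiction. Hence $b_1' < b_2'$ as required, and the orientation is well-defined and tournament-like.

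The heart of the argument is transitivity. Suppose $e_1 \to e_2$ and $e_2 \to e_3$ with $e_i = (a_i, b_i')$; then $a_1 < a_2 < a_3$ and $b_1' < b_2' < b_3'$. To conclude $e_1 \to e_3$ I need $\{e_1, e_3\}$ to form a $2K_2$ in $H$, i.e., that $(a_1, b_3')$ and $(a_3, b_1')$ are non-edges of $H$. For $(a_1, b_3')$, apply the translated property to the non-edge $(a_2, b_3')$: option (b) would force $(a_2, b_2')$ to be a non-edge, so option (a) holds, giving $(k, b_3') \notin E(H)$ for every $k \le a_2$ and in particular for $k = a_1$. The argument for $(a_3, b_1')$ is the mirror image, applying the translated property to the non-edge $(a_3, b_2')$ and eliminating option (a). Hence $e_1 \to e_3$, and $H^*$ is a comparability graph.

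The step I expect to require the most care is the consistency check for the orientation: the rule compares only the $a$-coordinates, and a priori one might worry that it creates a 2-cycle or disagrees with a $b'$-coordinate comparison on some $2K_2$. Once that coordinate-coherence is established, each transitivity deduction is a short two-line case split on which branch of the Bi-Consecutive Adjacency Property applies, and no further structural input is needed.
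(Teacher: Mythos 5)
Your proof is correct and follows essentially the same route as the paper: both use the clique partition and Bi-Consecutive numbering of Lemma \ref{CAclique}, first prove the coordinate-coherence claim (on any $2K_2$, the $A$-order and $B$-order agree), and then verify transitivity by two applications of the Bi-Consecutive property to the cross non-edges $(a_2,b_3')$ and $(a_3,b_2')$, exactly as the paper does (phrased there in terms of induced 4-cycles of $G=\overline H$ and a relation $\prec$ rather than an explicit orientation).
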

\begin{proof}
  Let $\overline H = G(V,E)$. Let $A \cup B$ be a partitioning of the vertex set $V$ as described in Lemma \ref{CAclique}, where $A$ and $B$ are cliques. Let $A=\{1, 2, \cdots, n_1\}$ and $B=\{1', 2', \cdots, n_2'\}$ be the associated numbering scheme.

Consider two adjacent vertices of $H^*$ corresponding to the edges $wx'$ and $yz'$ of $H$.  Since they are adjacent, $\{w, x', y, z'\}$ induces a $2K_2$ in $H$. Equivalently, these vertices induce a 4-cycle in $G$ with edges $wy$, $yx'$, $x'z'$ and $z'w$. We claim that $w < y$ if and only if $x' < z'$. To see this, assume that $w < y$. Since $yx' \in E(G)$, by the Bi-Consecutive property of the numbering scheme (Lemma \ref{prop1}), if $z' < x'$, $yz' \in E(G)$ or $wx' \in E(G)$, a contradiction. Hence, $x' < z'$. 

Now, to show that $H^*$ is a comparability graph, we define a relation $\prec$ as $ ab' \prec cd'$ if and only if $a,c \in A$, $b', d' \in B$ with $a < c$ and $b'< d'$ and $\{a, b', c, d'\}$ induces a $2K_2$ in $H$.  In view of the claim proved in the paragraph above, if $ab'$ and $cd'$ are adjacent vertices of $H^*$, they are comparable with respect to the relation $\prec$.

Let $ab' \prec cd'$ and $cd' \prec ef'$.  We have $\{a, b', c, d'\}$ inducing a 4-cycle in G with edges $ac$, $cb'$, $b'd'$ and $d'a$. Similarly, $\{c, d', e, f'\}$ induces a 4-cycle in $G$  with edges $ce$, $ed'$, $d'f'$ and $f'c$. We also have  $a<c<e$ and $b'<d'<f'$, by the definition of the relation $\prec$. By the Bi-Consecutive property of the numbering scheme (Lemma \ref{prop1}), $cf' \in E(G)$ and $cd' \notin E(G)$ implies that $af' \in E(G)$. Similarly, $ed' \in E(G)$ and $cd' \notin E(G)$ implies that $eb' \in E(G)$. Edges $ae$ and $b'f'$ are parts of cliques $A$ and $B$. Hence, we have an induced 4-cycle in $G$ with edges $ae$, $eb'$, $b'f'$ and $f'a$. We can conclude that $ab' \prec ef'$. Thus the relation $\prec$ is transitive and hence, $H^*$ is a comparability graph.
\qed
\end{proof}
\subsection*{Improved Complexities}
Lemma \ref{lm1} serves as the key ingredient in improving the time complexities of our algorithms. By the definition of $H^*$, a proper coloring of the vertices of $H^*$ is same as coloring the edges of $H$ such that no two edges get the same color if their end points induce a $2K_2$ in $H$ or equivalently a 4 cycle in $G$. Since the number of edges in $H^*$ may be of $O(t^2)$, where $t = |E(H)|$, time for computing $\chi(H^*)$ might go up to $O(t^2)=O(n^4)$, if we use the standard algorithm for the vertex coloring of comparability graphs. Let $m_{_{AB}} = n_1 n_2 - t$, the number of edges between $A$ and $B$ in $G$. We show that by utilizing the structure of $G$ along with the underlying comparability relation on the set of non-edges of $G$ defined in the proof of Lemma \ref{lm1}, computing the boxicity of $G$ can be done in $O(en+n^2)$, where  $e$ is $\min(m_{_{AB}}$, $t)$. Each color class can be extended to a maximal independent set and thus get an optimum box representation of $G$ in $O(en+kn^2)$, where $k$ = $box(G)$. The complexities claimed here are obtained by a suitable implementation of the greedy algorithm for the vertex coloring of comparability graphs, fine tuned for this special case and its careful amortized analysis. Due to the structural differences with interval bigraphs as explained before, this turned out to be much different from the method used in \cite{Abu10}. For a detailed description of the algorithm and its analysis, refer to Appendix \ref{complexity}.
\section{Constant Factor Approximation for the Boxicity of CA Graphs}\label{sapprox}
 First we give a lemma which is an adaptation of a similar one given in \cite{Abh1}.
 \begin{lemma} \label{lm4}
  Let $G(V, E)$ be a graph with a partition $(A, B)$ of its vertex set $V$ with $A = \{1, 2, \cdots, n_1\}$ and $B = \{1', 2', \cdots, n'_2\}$. Let $G_1(V, E_1)$ be its supergraph such that $E_1 = E \cup \{(a',b'): a',b' \in B\}$. Then, $box(G_1) \le 2 \cdot box(G)$.
 \end{lemma}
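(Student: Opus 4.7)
The plan is to start from an optimal box representation $G = I_1 \cap I_2 \cap \cdots \cap I_k$ with $k = box(G)$, and from each $I_j$ build two interval supergraphs $I_j^-$ and $I_j^+$ of $G_1$ by modifying only the intervals of vertices in $B$. Concretely, I would obtain $I_j^-$ from $I_j$ by extending the interval of every $B$-vertex leftward to a common point placed far to the left of all existing endpoints, while leaving every $A$-vertex's interval unchanged; and analogously $I_j^+$ by extending every $B$-interval rightward to a common point placed far to the right. Intuitively, either one-sided extension forces $B$ to become a clique, which is precisely what $G_1$ adds on top of $G$, while the two-sided split preserves enough information to reconstruct the original $A$-$B$ adjacencies upon intersection.

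I would first verify that every $I_j^-$ and $I_j^+$ is an interval supergraph of $G_1$: within $A$ the edges agree with those of $I_j$ and hence contain $G[A]$; within $B$ all pairs become adjacent because the modified intervals share a common endpoint; and between $A$ and $B$ extending intervals on one side can only add edges. Since $E(G_1) = E(G) \cup E(K_B)$, every edge of $G_1$ is present in both constructions.

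The heart of the argument is to show
\[
G_1 \;=\; \bigcap_{j=1}^{k} I_j^- \;\cap\; \bigcap_{j=1}^{k} I_j^+ ,
\]
which I would prove case by case on a pair $\{u,v\}$. For $u,v \in A$, the modifications leave all intersections unchanged, so the right-hand side restricted to $A$ equals $\bigcap_j I_j$ restricted to $A$, namely $G[A]$. For $u,v \in B$, every $I_j^-$ and $I_j^+$ contains the edge $uv$, matching $K_B \subseteq G_1$. The delicate case is $u \in A$, $v \in B$: the intervals of $u$ and $v$ intersect in $I_j^-$ iff the left endpoint of $u$ lies before the right endpoint of $v$ in $I_j$, and in $I_j^+$ iff the left endpoint of $v$ lies before the right endpoint of $u$ in $I_j$; the conjunction of these two one-sided conditions is exactly the interval-intersection condition in $I_j$, so intersecting over $j$ recovers the $A$-$B$ adjacencies of $G$. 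This mixed-pair case is the main obstacle: the whole scheme hinges on the two one-sided extensions conjunctively recovering the two-endpoint interval intersection condition, without gaining or losing any $A$-$B$ edge. Once the identity is verified, the displayed decomposition exhibits $G_1$ as an intersection of $2k$ interval graphs, yielding $box(G_1) \le 2k = 2\cdot box(G)$.
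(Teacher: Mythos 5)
Your proposal is correct and is essentially the paper's own proof: the paper likewise forms, for each interval graph $I_i$ in an optimal representation of $G$, two supergraphs $I_{i_1}$ and $I_{i_2}$ by extending every $B$-interval leftward to the common point $l_i$ (respectively rightward to $r_i$) while fixing the $A$-intervals, and verifies exactly your mixed-pair case analysis to conclude $G_1=\bigcap_i I_{i_1}\cap I_{i_2}$. The only cosmetic difference is your choice of a fresh far-away common endpoint instead of the extreme endpoints $l_i, r_i$, which changes nothing in the argument.
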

For a proof of this lemma, see Appendix \ref{appendix1}.
\begin{definition}
 Let $G(V, E)$ be an interval graph and $I$ be an interval representation of $G$. Let $l=\displaystyle\min_{u \in V}$ ${l_u(I)}$ and $r=\displaystyle\max_{u \in V}$ ${r_u(I)}$. Consider a graph $G'(V', E')$ such that $V' \supseteq V$ and $E' = E \cup \{(a, b)$: $a \in V' \setminus V$ and $b \in V'\}$. An interval representation $I'$ of $G'$ obtained by assigning interval $[l, r]$, $\forall u \in V' \setminus V$ and intervals $[l_u(I), r_u(I)]$, $\forall u \in V$ is called an extension of $I$ on $V'$.
\end{definition}
\begin{algorithm}
      %\linesnumbered
     %\SetNoline
     %\dontprintsemicolon
     \LinesNumbered 
     \SetAlgoNoLine
     \DontPrintSemicolon
    \caption{Find a near optimal box representation of given CA graph}
     \label{algApprox}    
     \KwIn{A circular arc graph $G(V, E)$}
     \KwOut{A box representation of $G$ of dimension at most $2k+1$ where $k=box(G)$}    
     \lIf {$G$ is an interval graph}{
        Output an interval representation $I_G$ of $G$, Exit}\;
        Compute a CA model $M(C, \mathcal A)$ of $G$\;
        Choose any point $p$ on the circle $C$\;
       Let $A$ be the clique corresponding to $p$; \hspace{0.25cm}$B = V \setminus A$\;
       Construct $G'(V, E')$ with $E' = E \cup \{(u',v'): u',v' \in B\}$\\\label{g'}\tcc*[f]{$G'$ is a co-bipartite CA graph by Lemma \ref{lmCA}} \;
       Find an optimum box representation $\mathcal{B'}=\{I_1', I_2', \cdots, I_{b}'\}$ of $G'$\\\tcc*[f]{Using the method described in Section \ref{sspoly}}\;
       Construct an interval representation $I$ for the subgraph induced on $B$\\\tcc*[f]{Induced subgraph on B is clearly an interval graph}\;
       Construct $I'$, the extension of $I$ on $V$\label{i'}\;
       Output  $\mathcal{B}=\{I_1', I_2', \cdots, I_{b}', I'\}$ as the box representation of $G$\label{b'}\; 
\end{algorithm}
\subsection*{Approximation Algorithm}
A method for computing a box representation of a given CA graph $G$ within a $(2+\frac{1}{k})$-factor where $k\ge 1$ is the boxicity of $G$ is given in Algorithm \ref{algApprox}. We use the $O(en)$ algorithm for computing boxicity of co-bipartite CA  graphs given in Section \ref{sspoly} as a subroutine here. Let $n=|V(G)|$ and $m=|E(G)|$. We can show that a near optimal box representation of $G$ can be obtained in $O(mn+kn^2)$. For more details, refer to Appendix \ref{appendix1}. If we just want to compute the approximate boxicity of $G$, it is enough to output $box(G')+1$, as proved below. This can be done in $O(mn+n^2)$.\\
\textbf{Proof of correctness: }
 Let us analyze the non-trivial case when $G$ is not an interval graph. Otherwise, the correctness is obvious.
\begin{lemma} \label{lmCA}
 $G'$ constructed in Line \ref{g'} of Algorithm \ref{algApprox} is a co-bipartite CA graph.
\end{lemma}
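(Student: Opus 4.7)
My plan is to decompose the statement into two claims---that $G'$ is co-bipartite, and that $G'$ is a CA graph---and then combine them. The co-bipartite part is immediate: all arcs in $A$ pass through $p$ by definition, so they pairwise intersect at $p$ and $A$ is a clique in $G$, hence also in the supergraph $G'$. The set $B$ is turned into a clique in Line \ref{g'} by construction. Thus $(A, B)$ partitions $V(G')$ into two cliques.

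For the CA part, I would invoke Lemma \ref{lmprop2}, which says that a co-bipartite graph admitting a two-clique partition with a numbering that satisfies the Bi-Consecutive Adjacency Property is a CA graph. It therefore suffices to exhibit such a numbering of $A$ and $B$ for $G'$. The natural candidate is the numbering $NS(M, p)$ introduced in Section \ref{Num}, which is defined purely from the original model $M$ of $G$ and the chosen point $p$---both of which are still at hand. By Lemma \ref{prop1}, $NS(M, p)$ already satisfies the Bi-Consecutive Adjacency Property with respect to $G$.

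The one non-mechanical step---and really the only place I expect to have to be careful---is observing that the Bi-Consecutive Adjacency Property is insensitive to edges internal to $B$. Inspecting the definition, the hypothesis ``$i \in A$ is adjacent to $j' \in B$'' and both alternative conclusions ``$j'$ is adjacent to all $k$ with $1\le k\le i$'' and ``$i$ is adjacent to all $k'$ with $1\le k'\le j'$'' each concern an edge with one endpoint in $A$ and the other in $B$. Since $G$ and $G'$ agree on all such cross-edges (they differ only by the addition of edges within $B$), the property transfers verbatim from $G$ to $G'$. Applying Lemma \ref{lmprop2} then yields a CA representation of $G'$, and combined with the first paragraph this shows that $G'$ is a co-bipartite CA graph.
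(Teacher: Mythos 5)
Your proof is correct and follows essentially the same route as the paper's: both establish the clique partition $(A,B)$, observe that $G$ and $G'$ agree on all edges between $A$ and $B$ so the Bi-Consecutive Adjacency Property of $NS(M,p)$ (Lemma \ref{prop1}) carries over to $G'$, and then conclude via Lemma \ref{lmprop2}. Your explicit remark that the property only concerns cross-edges just spells out what the paper states more tersely.
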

\begin{proof}
 It can be easily seen that $G'$ is a co-bipartite graph on the same vertex set as that of $G$ with cliques $A$ and $B$ and $V=A \cup B$. 
 Consider a numbering scheme $NS(M, p)$ of $G$ as described in Section \ref{Num} such that $A= \{1, 2, \cdots, n_1\}$ and  $B= \{1', 2', \cdots, n_2'\}$, based on the CA model $M(C, \mathcal A)$ and the point $p$ as chosen in Algorithm \ref{algApprox}. Notice that by construction of $G'$, for any pair of vertices $i \in A$ and $j' \in B$, $(u, v') \in E$ if and only if $(u, v') \in E'$. Recall that the numbering scheme $NS(M, p)$ satisfies Bi-Consecutive Adjacency Property for $G$ by Lemma \ref{prop1}. Clearly, the same will apply to $G'$ also. Hence by Lemma \ref{lmprop2}, we can infer that $G'$ is a co-bipartite CA graph.
\qed
\end{proof}
\begin{lemma}\label{lm2approx} 
The box representation $\mathcal{B}=\{I_1', I_2', \cdots, I_{b}', I'\}$, obtained in Line \ref{b'} of Algorithm \ref{algApprox} is a valid box representation of $G$ with $|\mathcal{B}| \le 2 \cdot box(G)+1$. 
\end{lemma}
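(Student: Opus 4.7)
The plan is to split the claim into two parts: (i) $\mathcal{B}$ is a valid box representation of $G$, meaning $G = I_1' \cap \cdots \cap I_b' \cap I'$; and (ii) the size bound $|\mathcal{B}| \le 2 \cdot box(G) + 1$. Part (ii) will follow almost immediately from Lemma \ref{lm4}, since by construction $G'$ is exactly the supergraph of $G$ obtained by making $B$ into a clique. Applying Lemma \ref{lm4} to the partition $(A,B)$ gives $box(G') \le 2 \cdot box(G)$, so $|\mathcal{B}| = b+1 = box(G')+1 \le 2 \cdot box(G) + 1$.

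For part (i), since $\{I_1', \ldots, I_b'\}$ is (by the previous step of the algorithm) an optimal box representation of $G'$, we have $G' = I_1' \cap \cdots \cap I_b'$. So it suffices to prove the identity $G = G' \cap I'$ as graphs on $V$. I would check this by a straightforward case analysis on the three types of vertex pairs $\{u,v\}$: both in $A$, one in $A$ and one in $B$, and both in $B$. When both are in $A$, the pair is an edge of $G$ because $A$ is the clique of arcs passing through $p$, and it is an edge of both $G'$ (clique $A$) and $I'$ (both vertices receive the full interval $[l,r]$). When $u \in A, v \in B$, by construction of $G'$ the edge $uv$ lies in $E(G')$ iff it lies in $E(G)$, and in $I'$ every vertex in $A$ is adjacent to every vertex in $B$; so the $I'$ intersection contributes no spurious constraint and $E(G) \cap (A \times B) = E(G' \cap I') \cap (A \times B)$. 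When both are in $B$, in $G'$ the pair is always an edge, while in $I'$ it is an edge exactly when the intervals in $I$ overlap, i.e., exactly when the pair is an edge of the induced subgraph $G[B]$, which is the same as being an edge of $G$ among $B$-vertices.

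I do not expect a real obstacle here; the only thing to be slightly careful about is that $G[B]$ is genuinely an interval graph, which is needed to justify the existence of $I$ used in the algorithm. This follows because the arcs in $B$ are precisely those not passing through $p$, and cutting the circle at $p$ turns each such arc into a genuine interval on the real line, so the intersection pattern of $B$ is exactly the intersection pattern of a family of intervals. With this in hand, the three-way case analysis completes the equality $G = G' \cap I'$, and combined with the bound $b \le 2 \cdot box(G)$ from Lemma \ref{lm4}, the lemma follows.
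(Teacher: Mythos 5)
Your proposal is correct and follows essentially the same route as the paper: both arguments reduce to a case analysis on pairs according to membership in $A$ and $B$ (the $I_i'$ handle $A$--$B$ non-edges, $I'$ handles $B$--$B$ non-edges, and $A$ being a clique rules out the remaining case), and both obtain the size bound by applying Lemma \ref{lm4} to get $box(G') \le 2\cdot box(G)$. Your factoring of the validity claim through the identity $G = G' \cap I'$ is just a repackaging of the paper's direct verification, and your side remark that $G[B]$ is an interval graph (cut the circle at $p$) matches the paper's justification as well.
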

\begin{proof}
  It is easy to see that $I'$ constructed in Line \ref{i'} of Algorithm \ref{algApprox} is a supergraph of $G$, since $I$ is an interval representation of the induced subgraph of $G$ on $B$ and $I'$ is an extension of $I$ on $V$. Since $\mathcal{B'}$ is a box representation of $G'$, each $I_i' \in \mathcal{B'}$, for $1 \le i \le b$ is a supergraph of $G'$ and in turn of $G$ too. $A$ is a clique in $G$ by definition. Consider any $(u, v') \notin E$ with $u \in A$ and $v' \in B$. Clearly, $(u, v') \notin E'$ as well and since $\mathcal{B'}$ is a box representation of $G'$, $\exists i$ such that $(u,v) \notin E(I_i')$ for some $1\le i\le b$. For any $(u', v') \notin E$ with $u', v' \in B$, we have $(u', v')\notin E(I')$. Thus, $G=I' \cap \displaystyle\bigcap_{1 \le i \le b}{I_{i}'}$. 

Thus, $\mathcal{B}=\{I_1', I_2', \cdots, I_{b'}', I'\}$ is a valid box representation for G of size $box(G') + 1$.  By Lemma \ref{lm4}, $box(G') \le 2 \cdot box(G)$, implying that $\mathcal{B}$ is of size at most  $2 \cdot box(G)+1$. 
\qed
\end{proof}
Lemma \ref{lm2approx} implies that $\mathcal{B}$ is a $(2+\frac{1}{k})$-factor approximate box representation where $k\ge 1$ is the boxicity of $G$. 
\section{Additive $2$-Factor Approximation for the Boxicity of Normal CA Graphs} \label{thAdd}
  We assume that a normal CA model $M(C, \mathcal A)$ of $G$ is given. An additive two factor approximation algorithm for computing a box representation of normal CA graphs is given in Algorithm \ref{algApprox2}. We can show that in $O(mn+kn^2)$ time, the algorithm outputs a near optimal box representation of $G$ where $n=|V(G)|$, $m=|E(G)|$ and $k=box(G)$. Refer to Appendix \ref{appendix1} for more details. If we just want to compute the approximate boxicity of $G$, it is enough to output $box(H)+2$, as proved below. This can be done in $O(mn+n^2)$.
\begin{algorithm}
     %\linesnumbered
     %\SetNoline
     %\dontprintsemicolon
     \LinesNumbered 
     \SetAlgoNoLine
     \DontPrintSemicolon
    \caption{Find a additive $2$-optimal box representation of given normal CA graph}
     \label{algApprox2}    
     \KwIn{A normal CA graph $G(V, E)$, with an NCA model $M(C, \mathcal A)$ of $G$}
     \KwOut{A box representation of $G$ of dimension at most $k+2$ where $k=box(G)$}    
     \lIf {$G$ is an interval graph}{
        Output an interval representation $I_G$ of $G$, Exit}\;
        Choose any point $p$ on the circle $C$; Let $A$ be the clique corresponding to $p$\label{p}\; 
        Let $p_1$ be the farthest clockwise end point of any arc passing through $p$\;
        Let $p_2$ be the farthest anticlockwise end point of any arc passing through $p$\;
        Let $q$ be a point on the arc $[p_1, p_2]$ with $q \ne p_1, p_2$\; \label{q}
        Let $B$ be the clique corresponding to $q$\;
        Let $H$ be the induced subgraph on $A \cup B$\\\label{h}\tcc*[f]{Clearly, $H$ is a co-bipartite CA graph}\;
        Find an optimum box representation $\mathcal{B'}=\{I_1, I_2, \cdots, I_h\}$ of $H$\\\tcc*[f]{Using the method described in Section \ref{sspoly}}\;
        \lFor{$i = 1$ to $h$}{
        Construct $I'_{i}$, the extension of $I_i$ on $V$\label{i}\;}
        Construct an interval representation $I_A$ for the induced subgraph on $V \setminus A$\\\tcc*[f]{Induced subgraph on $V \setminus A$ is an interval graph}\; 
        Construct $I'_A$, the extension of $I_A$ on $V$\label{a}\;
        Construct an interval representation $I_B$ for the induced subgraph on $V \setminus B$\\\tcc*[f]{Induced subgraph on $V \setminus B$ is an interval graph}\;
        Construct $I'_B$, the extension of $I_B$ on $V$\;
        Output  $\mathcal{B}=\{I_1', I_2', \cdots, I_{h}', I'_A, I'_B\}$ as the box representation of $G$\label{b}\; 
\end{algorithm} \\
\textbf{Proof of correctness: }
Since $G$ is a normal CA graph, the set of arcs passing through $p$ does not contain any circle cover pair of arcs. Therefore, $[p, p_1] \cup  [p_2, p]$ does not cover the entire circle $C$. So, any point in the arc $(p_1, p_2)$, in particular the point $q$ defined in Line \ref{q} of Algorithm \ref{algApprox2}, is not contained in any arc passing through $p$. It follows that $A \cap B=\emptyset$. Since $A$ and $B$ are cliques, $H$, the induced subgraph on $A \cup B$ is a co-bipartite CA subgraph of $G$. We can compute an optimum box representation $\mathcal{B'}$ of $H$ in polynomial time using the method described in Section \ref{sspoly}.  

$I_A$ and $I_B$ are interval graphs because they are obtained by removing vertices corresponding to arcs in $\mathcal A$ passing through points $p$ and $q$  respectively. Since $I_A$ is a supergraph of $G$ on $V\setminus A$ and $I'_A$ is the extension of $I_A$ on $V$, we can conclude that $I'_A$ is a super graph of $G$. Similarly, $I'_B$ is also a super graph of $G$. Since $\mathcal{B'}$ is a box representation of $H$, each $I_i \in \mathcal{B'}$ is a supergraph of induced subgraph $H$. Since $I_i'$ is the extension of $I_i$ on $V$, $I'_i$ is a super graph of $G$. 

 Consider $(u, v)\notin E$. \textbf{Case (i)} If $u, v \in V \setminus A$, by construction of $I'_{A}$, $(u, v)\notin E(I'_{A})$. \textbf{Case (ii)} If $u, v \in V \setminus B$, by construction of $I'_{B}$, $(u, v)\notin E(I'_{B})$. Remember that $A$ and $B$ are cliques. If both (i) and (ii) are false, then one of \{u, v\} is in $A$ and the other is in $B$. Since $\mathcal{B'}$ is a box representation of $H$, $(u,v) \notin E(I_i)$ for some $1\le i\le h = |\mathcal{B'}|$. By construction of $I'_i$, $(u,v) \notin E(I'_i)$ too. Hence, $G=I'_A \cap I'_B \cap \displaystyle\bigcap_{1 \le i \le h}{I_{i}'}$. Thus we get $\mathcal{B}=\{I'_{A}, I'_{B}, I'_{1}, I'_{2}, \cdots, I'_{h}\}$ is a valid box representation of $G$ of size $box(H) + 2$  which is at most $box(G)+2$, since $H$ is an induced subgraph of $G$.

In Algorithm \ref{algApprox2}, we assumed that an NCA model of the graph is given. This was required because recognizing NCA graphs in polynomial time is still an open problem. We can observe that though the algorithm of this section is given for normal CA graphs, it can be used for a wider class as stated below. 
\begin{theorem}
 If we are given a circular arc model $M(C, \mathcal{A})$ of $G$ with a point $p'$ on the circle $C$ such that the set of arcs passing through $p'$ does not contain a circle cover pair, then we can approximate the boxicity of $G$ within an additive $2$-factor in polynomial time using Algorithm \ref{algApprox2}.
\end{theorem}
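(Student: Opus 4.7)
The plan is to inspect where the hypothesis ``$G$ is a normal CA graph'' was actually used in the correctness proof of Algorithm \ref{algApprox2}, and show that the only role of that hypothesis is to guarantee the existence of a point $p$ on $C$ such that no two arcs of $\mathcal{A}$ through $p$ form a circle cover pair. Since the theorem supplies exactly such a point $p'$ as part of the input, we simply replace the arbitrary choice of $p$ made in Line~\ref{p} of Algorithm \ref{algApprox2} by $p := p'$, and argue that the rest of the algorithm goes through verbatim.

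First I would retrace the correctness argument for Algorithm \ref{algApprox2} step by step. The NCA hypothesis is invoked only in a single sentence: it ensures that the arcs passing through $p$ contain no circle cover pair, so that $[p, p_1] \cup [p_2, p]$ does not cover $C$, and therefore the open arc $(p_1, p_2)$ is nonempty and is avoided by every arc through $p$. With $p := p'$ this same conclusion is immediate by the hypothesis on $p'$. Hence $q \in (p_1, p_2)$ can be chosen as in Line~\ref{q}, and the corresponding clique $B$ satisfies $A \cap B = \emptyset$, exactly as before.

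Next I would verify that each subsequent step remains valid without any appeal to normality. The induced subgraph $H$ on $A \cup B$ is co-bipartite (since $A$, $B$ are disjoint cliques) and is an induced subgraph of the CA graph $G$, hence itself a co-bipartite CA graph, so Theorem~\ref{th4} and the algorithm of Section~\ref{sspoly} produce an optimum box representation $\mathcal{B}'=\{I_1,\dots,I_h\}$ of $H$ with $h = box(H) \le box(G)$. The induced subgraphs on $V \setminus A$ and on $V \setminus B$ are interval graphs because removing all arcs through a single point of the circle yields an arc family on an arc (equivalently, on a line). Extending the resulting interval representations $I_A$ and $I_B$ to $V$, together with extensions of the $I_i$'s, produces supergraphs of $G$, and the case analysis of the non-edges of $G$ (both endpoints in $V\setminus A$; both in $V\setminus B$; or one in $A$ and one in $B$, handled by $\mathcal{B}'$) is unchanged. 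Therefore $\mathcal{B}=\{I'_A, I'_B, I'_1, \dots, I'_h\}$ is a valid box representation of $G$ of size $box(H) + 2 \le box(G) + 2$, giving the additive $2$-factor bound.

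The anticipated obstacle is purely verificational: I must check that normality was not silently used in the complexity analysis either (for instance, in recognizing that $H$ is co-bipartite CA, or in the $O(mn+kn^2)$ running time stated in Appendix~\ref{appendix1}). Inspection shows that these arguments depend only on having the point $p$ with the no-circle-cover property together with the given CA model $M$; no global normality of $M$ is invoked. Hence the algorithm, with the single substitution $p := p'$, runs in polynomial time and returns a box representation within an additive $2$-factor of $box(G)$, which is precisely the claim of the theorem.
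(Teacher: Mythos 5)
Your proposal is correct and matches the paper's argument: the paper's proof of this theorem is exactly the substitution $p := p'$ in Line~\ref{p} of Algorithm~\ref{algApprox2}, noting that normality of the model was used in the correctness proof only to guarantee that the arcs through $p$ contain no circle cover pair, which the hypothesis on $p'$ now supplies directly. Your more detailed retracing of the remaining steps (disjointness of $A$ and $B$, the co-bipartite CA subgraph $H$, the interval graphs on $V\setminus A$ and $V\setminus B$, and the unchanged complexity analysis) is a faithful elaboration of what the paper summarizes as ``the rest of the algorithm is similar.''
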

\begin{proof}
  In Line \ref{p} of Algorithm \ref{algApprox2}, select $p'$ (guaranteed by the assumption of the theorem) as the point $p$. Such a point can be found in $O(n^2)$ time, if it exists. The rest of the algorithm is similar.
\qed
\end{proof}
  Though such a representation need not exist in general, it does exist for many important subclasses of of CA graphs and can be constructed in polynomial time; for example, for  proper CA graphs or normal helly CA graphs. In fact, for these classes, construction of a normal CA (NCA) model itself from their adjacency matrices can be done in polynomial time. 
\begin{corollary}
 Boxicity of any proper circular arc graph can be approximated within an additive $2$-factor in polynomial time. 
\end{corollary}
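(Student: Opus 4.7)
The plan is to reduce the proper circular arc case to the assumption of the preceding theorem by exhibiting, in polynomial time, a circular arc model in which some point $p'$ on the circle is crossed by no circle cover pair. Once such a representation is in hand, Algorithm \ref{algApprox2} applies verbatim, so the corollary follows immediately from the preceding theorem.

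First, I would invoke the containment PCA $\subsetneq$ NCA stated in the introduction: every proper circular arc graph admits a normal circular arc representation, that is, a CA model in which no two arcs together cover the circle. Consequently, in such a model, no pair of arcs passing through any common point $p'$ can form a circle cover pair. Hence any point $p'$ on the circle satisfies the hypothesis of the preceding theorem.

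Next, I would cite the known fact (recalled explicitly in the paper when discussing NCA graphs) that given a proper circular arc graph $G$, one can construct an NCA model $M(C,\mathcal A)$ of $G$ in polynomial time. This step is where essentially all of the work sits, but it is not something we need to prove here; it is a standard result from the CA graph recognition literature used as a black box.

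Finally, with $M(C,\mathcal A)$ in hand, I would pick any point $p'$ on $C$ and invoke the preceding theorem to run Algorithm \ref{algApprox2} on $(G, M, p')$, obtaining a box representation of $G$ of size at most $box(G)+2$ in polynomial time. The main conceptual obstacle is the unavailability of a polynomial-time NCA recognition procedure for \emph{arbitrary} graphs (which is why the previous theorem had to take the model as input), but this obstacle disappears for the PCA subclass precisely because constructing an NCA model from a PCA graph is known to be polynomial, so the corollary reduces to assembling these two ingredients.
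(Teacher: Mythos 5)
Your proposal matches the paper's own reasoning: the corollary is obtained there exactly by noting that PCA $\subsetneq$ NCA, that an NCA model of a proper circular arc graph can be constructed in polynomial time (used as a black box), and that with such a model in hand Algorithm \ref{algApprox2} (via the preceding theorem, with any point $p'$) yields a box representation of size at most $box(G)+2$. No gaps; this is essentially the same argument.
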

 \bibliographystyle{splncs03}
  
\newpage
\appendix
\section{Appendix 1}\label{appendix1}
\subsection*{Proof of Lemma \ref{CAclique}: }
  Let $G$ be a co-bipartite CA graph. Recall that a circular arc model of $G$ is constructable in linear time. In any circular arc model  $M(C,\mathcal{A})$ of a co-bipartite CA graph $G$, there are two points $p_1$ and $p_2$ on the circle $C$ such that every arc passes through at least one of them \cite{Tucker2,Lin09}. It is easy to see that these points can be identified in $O(n^2)$ time. Let the clique corresponding to $p_1$ be denoted as $A$. Let $B=V \setminus A$, which is clearly a clique, since the arcs corresponding to all vertices in $B$ pass through $p_2$. Let $|A|=n_1$ and $|B|=n_2$. Let vertices in $A$ be numbered $1, 2, \cdots, n_1$ and vertices of $B$ be numbered $1', 2', \cdots, n_2'$ according to the numbering scheme $NS(M, p_1)$ as described in the beginning of Section \ref{Num}. Clearly this numbering scheme satisfies Bi-Consecutive Adjacency Property by Lemma \ref{prop1}.
\qed
\subsection*{Proof of Lemma \ref{lmprop2}: }
The proof is by construction of a CA model $M(C,\mathcal{A})$ for $G$.\\
\textbf{Step 1:} Choose four distinct points $a, b, c, d$ in the clockwise order on $C$. Initially fix $s(i) = a$  for all $i \in A$ and $s(j') = c$  for all $j' \in B$. Choose $n_1$ distinct points $p_{n_1}$, $p_{n_1 - 1}$, $\cdots$, $p_1$ in the clockwise order on the arc $(a, b)$ and set $t(i)=p_i$ for all $ i \in A$. Choose $n_2$ distinct points $p_{n'_2}$, $p_{{n_2 - 1}'}$, $\cdots$, $p_{1'}$ in the clockwise order on the arc $(c, d)$ and set  $t(j')=p_{j'}$ for all $ j' \in B$. As of now, the family of arcs that we have constructed represents two disjoint cliques corresponding to $A$ and $B$.\\
\textbf{Step 2:} Now we will modify the start points of each arc as follows: Consider vertex $i \in A$. If $j' \in B$ is the highest numbered vertex in $B$ such that $i$ is adjacent to all $k'$ with $1' \le k' \le j'$, then set $s(i) = t(j')= p_{j'}$. Similarly, Consider vertex $j' \in B$. If $i \in A$ is the highest numbered vertex in $A$ such that $j'$ is adjacent to all $k$ with $1 \le k \le i$, then set  $s(j') = t(i)= p_i$. Notice that we are not making any adjacencies not present in $G$ between vertices of $A$ and $B$ in this step.

 Since $A$ and $B$ are cliques, what remains to prove is that if a vertex $i \in A$ is adjacent to a vertex $j' \in B$, their corresponding arcs overlap. Consider such an edge $(i, j')$. If $j'$ is adjacent to all $k$ such that $1\le k \le i$, we would have extended $s(j')$ to meet $t(i)$ in Step 2 above. If this does not occur, then by assumed Bi-Consecutive Adjacency Property, $i$ is adjacent to all $k'$ such that $1\le k' \le j'$. In this case, we would have extended $s(i)$ to meet $t(j')$ in Step 2. In both cases, the arcs corresponding to vertices $i$ and $j'$ overlap. We got a CA model of $G$ proving that $G$ is a CA graph.
\qed
\subsection*{Proof of Lemma \ref{lm4}: }
 Let $k$ be the boxicity of $G$ and $\{I_1, I_2, \cdots, I_k\}$ be an optimal box representation of $G$. For each $1 \le i \le k$, let   $l_i = \displaystyle\min_{u\in V}$ $l_u(I_i)$ and $r_i = \displaystyle\max_{u\in V}$ $r_u(I_i)$. Let $I_{i_1}$ be the interval graph obtained from $I_i$ by assigning the interval $\bigl[l_u(I_{i}),$ $r_u(I_{i})\bigr]$, $\forall u \in A$ and the interval $\bigl[l_i,$ $r_{v'}(I_{i})\bigr]$, $\forall v' \in B$. Let $I_{i_2}$ be the interval graph obtained from $I_i$ by assigning the interval $\bigl[l_u(I_{i}),$ $r_u(I_{i})\bigr]$, $\forall u \in A$ and the interval  $\bigl[l_{v'}(I_{i}),$ $r_i\bigr]$, $\forall v' \in B$.

  Note that, in constructing $I_{i_1}$ and  $I_{i_2}$ we have only extended some of the intervals of $I_i$ and therefore, $I_{i_1}$ and  $I_{i_2}$ are super graphs of $I$ and in turn of $G$. By construction, $B$ induces cliques in both  $I_{i_1}$ and  $I_{i_2}$, and thus they are supergraphs of $G_1$ too. 

 Now, consider $(u,v') \notin E$ with $u \in A$, $v' \in B$. Then either $r_{v'}(I_i) < l_u(I_i)$ or $r_u(I_i) < l_{v'}(I_i)$. If $r_{v'}(I_i) < l_u(I_i)$, then clearly the intervals $[l_i, r_{v'}(I_i)]$ and $[l_u(I_i), r_u(I_i)]$ do not intersect and thus $(u,v') \notin E(I_{i_1})$. Similarly, if $r_u(I_i) < l_{v'}(I_i)$, then $(u,v') \notin E(I_{i_2})$. If both $u, v \in A$ and $(u,v) \notin E$, then $\exists i$ such that  $(u,v) \notin E(I_i)$ for some $1\le i\le k$ and clearly by construction,  $(u,v) \notin E(I_{i_1})$ and  $(u,v) \notin E(I_{i_2})$.

  It follows that $G_1=\displaystyle\bigcap_{1 \le i \le k}{I_{i_1} \cap I_{i_2}}$ and therefore, $box(G_1) \le 2 \cdot box(G)$.                                                                                                                                                                                                                                                                                                                                                                                                                                                                                                                                                                                                                                                                                                                                                                                              \qed
\subsection*{Time Complexity of the Algorithm of Section \ref{sapprox}:}
Let $n=|V(G)|$ and $m=|E(G)|$. Whether the given graph is an interval graph can be determined in linear time. Given any CA graph $G(V, E)$, we can compute a CA model $M(C, \mathcal A)$ for $G$ in linear time \cite{Ross1}. A partition $(A, B)$ of the vertex set of $G$ as mentioned in the algorithm can be constructed in $O(n)$ time from the CA model $M$ of $G$. Construction of  $G'(V, E')$ from $G(V, E)$ can also be done in $O(n^2)$. Let $m_{_{AB}} = |\{ab' \in E(G) | a \in A$ and $b' \in B\}|$ and $m'_{_{AB}} = |\{ab' \in E(G') | a \in A$ and $b' \in B\}|$. In Section \ref{comp}, we discussed how to compute the boxicity of $G'$ in $O(en+n^2)$ time and an optimal box representation of co-bipartite CA graph $G'$ in $O(en+kn^2)$ where $k$ is the boxicity of $G'$ and $e=\min(m'_{_{AB}}, n_1 n_2 - m'_{_{AB}})$. Since by construction of $G'$, $m'_{_{AB}} = m_{_{AB}} \le m$, the time complexity is $O(mn+kn^2)$. The additional work for computing $I'$ in the construction of $\mathcal{B}$ can be done in $O(n)$. Thus, a near optimal box representation of $G$ is obtained in $O(mn+kn^2)$.
\subsection*{Time Complexity of the Algorithm of Section \ref{thAdd}:}
Whether the given graph is an interval graph can be determined in linear time. Choosing point $q$ can be done in $O(n)$ time. Construction of  $H$ from $G$ can also be done in $O(m+n)$.  Let $k=box(G)$ and $h=box(H)$. Since $H$ is an induced subgraph of $G$, $h \le k$. In Section \ref{comp}, we discussed how to compute the boxicity of $H$ in $O(en'+n'^2)$ and an optimal box representation of co-bipartite CA graph $H$ in $O(hen'+hn'^2)$, where $n'=|V(H)|\le n$ and $e=\min(|E(\overline H)|, |A|.|B| -|E(\overline H)|) \le m$. The additional work for computing the interval supergraphs $I'_A, I'_B$ is $O(n)$. Construction of $I'_{1}, I'_{2}, \cdots, I'_{h}$ from $I_{1}, I_{2}, \cdots, I_{h}$ requires only $O(h.n)=O(n^2)$ time in total. Thus, in $O(mn+kn^2)$ time, the algorithm outputs a near optimal box representation of $G$.
\section{Appendix 2 - Complexity of Computing the Boxicity and Optimal Box Representation of Co-bipartite CA graphs} \label{complexity}
 Let $G(V, E)$ be a co-bipartite CA graph with $|E| = m$ and $|V|= n$. Let $A \cup B$ be a partitioning of the vertex set $V$ as described in Lemma \ref{CAclique}, where $A$ and $B$ are cliques. Let $A=\{1, 2, \cdots, n_1\}$ and $B=\{1', 2', \cdots, n_2'\}$ be the associated numbering scheme. Let $m_{_{AB}} = |\{ab' \in E(G) | a \in A$ and $b' \in B\}|$ and $t = n_1 n_2 - m_{_{AB}}=|E(\overline G)|$. Let $e=\min(m_{_{AB}}, t)$ and $k=box(G)$. In this section, we will show an $O(en+n^2)$ algorithm to compute the boxicity and an $O(en+kn^2)$ algorithm to get an optimal box representation of $G$. Let $H=\overline G$. Recall that by Theorem \ref{th4}, $box(G)= \chi(H^*)$. Let $C_1, C_2, \cdots, C_k$ be the color classes in an optimal coloring of $H^*$. For $1\le i\le k$, let $C'_i$ be a maximal independent set containing $C_i$ and $E_i=\{e \in E(H)$: $e$ corresponds to a vertex in $C'_i\}$. By Theorem \ref{th4}, $\{G_i=\overline {H_i}$ : $H_i=(V, E_i)$, $1 \le i \le k \}$ gives an optimal box representation of $G$.
\subsection{Computing the Boxicity of $G$ in $O(en+n^2)$ Time} \label{appendix2}
 We call $ab'$ a \textit{non-edge} of $G$, if it is an edge of $H$. Recall that by Lemma \ref{lm1}, $H^*$ is a comparability graph. We had defined a transitive relation $\prec$ on $V(H^*)$ i.e on the non-edges of $G$, in the proof of Lemma \ref{lm1} as follows : $ab' \prec cd'$ if and only if $ a < c$ and $b' < d'$ and $\{a, b', c, d'\}$ induces a 4-cycle in $G$. Since $H^*$ is a comparability graph, any coloring satisfying the property that the color assigned to (the vertex corresponding to) a non-edge $cd'$ equals $\displaystyle\max_{e \in E(H) : e \prec cd'}{Color(e) + 1}$ is an optimum coloring \cite{Gallai} of $H^*$. We refer to this as greedy strategy in our further discussion. For convenience, hereafter we refer to the coloring of a vertex of $H^*$ as coloring of the corresponding non-edge of $G$. 

Assume that the colors available are $1, 2, \cdots$. The following definitions are with respect to $G$. For $X \subseteq V$, let $N_{_{X}}(v)$ represent the set of neighbors of $v$ in $X$ and $\widehat N_{_{X}}(v)=X \setminus N_{_{X}}(v)$.  Similarly, for $S \subseteq V$, $N_{_{X}}(S)= \displaystyle \bigcup_{v\in S} N_{_{X}}(v)$ and $\widehat N_{_{X}}(S)=\displaystyle \bigcup_{v\in S} \widehat N_{_{X}}(v)$. Let $deg_{_{X}}(v)$ denote $|N_{_{X}}(v)|$. The linked lists corresponding to $N_{_{B}}(v)$ and $\widehat N_{_{B}}(v)$ for each $v \in A$ and $N_{_{A}}(v')$ and $\widehat N_{_{A}}(v')$ for each $v' \in B$, with their entries sorted with respect to the numbering scheme described in the above paragraph, can be constructed from the adjacency list of $G$. This can be done in overall $O(n^2)$ time. We will assume that lists $N_{_{A}}, \widehat N_{_{A}}, N_{_{B}}, \widehat N_{_{B}}$ are global data structures. 

For $x\in A$, we color the non-edges incident on $x$ by invoking Algorithm \ref{alg1}, for $x=1, 2, \cdots, n_1$ in that order. For the convenience of our analysis, we refer to an invocation of Algorithm \ref{alg1} for vertex $x$ as \textit{the processing of $x$}. Note that, by the time a non-edge $xy'$ of $G$ is considered for coloring, i.e, during the processing of $x$, all non-edges $tu'$ such that $tu'\prec xy'$ are already colored, since, by the definition of $\prec$, $t < x$ and $t$ is processed before $x$. Consider a non-edge $xy'$ of $G$. Let $F_{xy'}=F_{y'}= \{ab'\in E(H):ab' \prec xy' \}$. According to the greedy algorithm, the non-edge $xy'$ of $G$ has to get the color $maxcolor(F_{y'}) + 1$, where  $maxcolor(F_{y'})= \displaystyle\max_{ab' \in F_{y'}}{Color(ab')}$. 

The next question is how to find $maxcolor(F_{y'})$ efficiently. For that we need to understand the set $F_{y'}$ more closely. Let $P=\{a \in N_{_{A}}(\widehat N_{_{B}}(x))|a < x\}$ and  $Q=\{ b' \in N_{_{B}}(x)|b'< \min{\widehat N_{_{B}}(x)}\}$. 
\begin{claim}
 ${F_{y'}=\displaystyle\biguplus_{a \in N_{_{A}}(y')\cap P} \{ab'\in E(H): b' \in Q\}}=\{ab'\in E(H):a \in N_{_{A}}(y')\cap P$ and $b' \in Q\}$.
\end{claim}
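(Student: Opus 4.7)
The plan is to unpack the defining relation $\prec$ in terms of the combinatorial data $(a, b', x, y')$ and then verify both set inclusions, using the Bi-Consecutive Adjacency Property of the numbering scheme as the one non-trivial tool. First I would recall from the proof of Lemma \ref{lm1} that $ab' \prec xy'$ is equivalent to $a < x$, $b' < y'$, and $\{a, b', x, y'\}$ inducing a $2K_2$ in $H$; since $A$ and $B$ are cliques of $G$, this $2K_2$ condition amounts exactly to requiring the two ``cross'' edges $ay'$ and $xb'$ to lie in $E(G)$. The second equality in the claim is then nothing more than a rewriting of the disjoint union indexed by $a$ as a single set defined by the conjunction of its membership conditions.

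Next I would verify the inclusion $\supseteq$. Given $ab' \in E(H)$ with $a \in N_A(y') \cap P$ and $b' \in Q$, membership in $P$ and $N_A(y')$ immediately supplies $a < x$ and $ay' \in E(G)$, while $b' \in Q$ supplies $xb' \in E(G)$ together with $b' < \min \widehat N_B(x)$. Since the non-edge being colored is $xy'$, we have $xy' \notin E(G)$, so $y' \in \widehat N_B(x)$ and hence $b' < \min \widehat N_B(x) \le y'$. All requirements for $ab' \prec xy'$ are then in hand.

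For the reverse inclusion $\subseteq$, take $ab' \in F_{y'}$. The conditions $a < x$ and $ay' \in E(G)$ are direct, and combined with $y' \in \widehat N_B(x)$ (which follows from $xy' \notin E(G)$) they give $a \in N_A(\widehat N_B(x))$, hence $a \in N_A(y') \cap P$. The condition $xb' \in E(G)$ places $b'$ in $N_B(x)$; what remains, and what I expect to be the main obstacle, is to show $b' < \min \widehat N_B(x)$, so that $b' \in Q$.

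To handle this remaining step I would argue by contradiction: suppose some $c' \in \widehat N_B(x)$ satisfies $c' \le b'$. Since $c' \notin N_B(x)$ but $b' \in N_B(x)$, necessarily $c' < b'$. Applying the Bi-Consecutive Adjacency Property (Lemma \ref{prop1}) to the edge $xb' \in E(G)$ yields two alternatives: either $x$ is adjacent in $G$ to every $B$-vertex with index at most $b'$, which forces $xc' \in E(G)$ and contradicts $c' \in \widehat N_B(x)$; or $b'$ is adjacent in $G$ to every $A$-vertex with index at most $x$, which, since $a < x$, forces $ab' \in E(G)$ and contradicts $ab' \in E(H)$. Both branches fail, so $b' < \min \widehat N_B(x)$, placing $b' \in Q$ and completing the proof.
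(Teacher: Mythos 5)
Your proposal is correct and follows essentially the same route as the paper's proof: unpack $ab' \prec xy'$ as $a < x$, $b' < y'$ plus the two cross edges $ay', xb' \in E(G)$, verify the easy inclusion directly, and for the reverse inclusion derive $b' < \min \widehat N_{_{B}}(x)$ by contradiction via the Bi-Consecutive Adjacency Property applied to the edge $xb'$, with both alternatives ruled out exactly as in the paper. Your write-up is in fact slightly more explicit than the paper's (spelling out why $a \in N_{_{A}}(\widehat N_{_{B}}(x))$ and why the offending vertex $c'$ must be strictly smaller than $b'$), but the underlying argument is the same.
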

\begin{proof}
 Since $F_{y'}=\{ab'\in E(H):ab' \prec xy' \}$, we need to show that for any $ab' \in E(H)$, $ab'\prec xy'$ if and only if $a \in N_{_{A}}(y')\cap P$ and $b'\in Q$. Recall that $ab' \prec xy'$ if and only if $a<c$, $b'< d'$ and $\{a, b', c, d'\}$ induces a 4-cycle in $G$. Observe that $N_{_{A}}(y')\cap P=\{a \in N_{_{A}}(y') : a < x\}$. It is easy to see that if $a \in N_{_{A}}(y')\cap P$ and $b'\in Q$, then $ab'\prec xy'$.  

To prove the other direction, assume that $ab' \prec xy'$. Then we have $a \in N_{_{A}}(y')$, $a < x$ and therefore, $a \in N_{_{A}}(y')\cap P$. Similarly, $b' \in N_{_{B}}(x)$, $b'< y'$. Suppose $\min{\widehat N_{_{B}}(x)} < b'$. Since the numbering scheme satisfies Bi-Consecutive Adjacency Property, $xb' \in E(G)$ implies that either $(x, \min{\widehat N_{_{B}}(x)})\in E(G)$ or $ab'\in E(G)$, which is a contradiction. Therefore $b'< \min{\widehat N_{_{B}}(x)}$ and therefore, $b' \in Q$. 
\qed
\end{proof}
 By the above claim, $maxcolor(F_{y'})=\displaystyle\max_{a \in N_{_{A}}(y') \cap P}\{\displaystyle\max_{b' \in Q, ab' \in E(H)}{\{Color(ab')\}}\}$. But, if we have to do this computation separately for each $y' \in \widehat N_{_{B}}(x)$, then for any $a \in P$ which is in $N_{_{A}}(y')$ of more than one $y'$, the computation of $maxcolor(a)=\displaystyle\max_{b' \in Q, ab' \in E(H)}{\{Color(ab')\}}$ has to be repeated. To avoid this repetition, in Algorithm \ref{alg1} we process all non-edges incident at a vertex $x \in A$ in parallel as follows. In Lines \ref{sf} to \ref{ef} of Algorithm \ref{alg1}, for each $a \in P$, $maxcolor(a)+1$ is computed and stored. This is referred to as Type 1 work in the algorithm. For each $y' \in \widehat N_{_{B}}(x)$, Lines \ref{sf1} to \ref{ef1} referred to as Type 2 work in Algorithm \ref{alg1} computes $maxcolor(F_{y'}) + 1$ using the values of $maxcolor(a)+1$ already computed and stored as part of Type 1 work. In the process, for each $y' \in \widehat N_{_{B}}(x)$, the algorithm assigns the color $maxcolor(F_{y'}) + 1$ to $xy'$, which is the optimum color suggested by the greedy strategy.

\begin{algorithm}
     %\linesnumbered
     %\SetNoline
     %\dontprintsemicolon
     \LinesNumbered 
     \SetAlgoNoLine
     \DontPrintSemicolon
 \caption{Computing colors of non-edges incident on vertex $x \in A$}
\label{alg1} 
     \KwIn{$x \in A$ }
    \KwOut{$Color(xy')$ for each $y' \in \widehat N_{_{B}}(x)$}
        \tcc{Type 0 work : Lines \ref{Ins} to \ref{Ine} - Initializations}
         \tcc{Let $P=\{a \in N_{_{A}}(\widehat N_{_{B}}(x)) | a < x\}$}
          {For $1\le a \le n_1$, let $A_P[a]=0$ initially. For each $a \in P$, set $A_P[a]=1$ and $color[a] = 0$\label{Ins}}\; 
    {Compute  $Q=\{b' \in N_{_{B}}(x)| b'< p'\}$, where $p' =\min{(\widehat N_{_{B}}(x))}=\widehat N_{_{B}}(x)[1]$ and initialize $ptr1[b']=$ start of  $\widehat N_{_{A}}(b')$ for $b' \in Q$\label{s2}}\;
         {Compute $R=\widehat N_{_{B}}(x)$ and initialize $ptr2[r']=$ start of  $N_{_{A}}(r')$ for $r' \in R$ \label{Ine}}\;
       \For{$cur = 1$ to $n_1$}{                           
            \If {$A_{P}[cur]=1$}{
                \tcc{Type 1 work : Lines \ref{sf} to \ref{ef} - Computing $color[cur]= 1+ $ the maximum color given to a non-edge between $cur$ and $Q$}        
                 \For {each $q'$ in $Q$\label{sf}}{ 
                        \While {$\widehat N_{_{A}}(q')[ptr1[q']]< cur $ and not list-end of $\widehat N_{_{A}}(q')$\label{wh1}} 
                         {Increment the pointer $ptr1[q']$\;}                       
		      \uIf{$\widehat N_{_{A}}(q')[ptr1[q']] = cur$}{
			  $color[cur] = Max (color[cur],$ $Color(cur$ $q') + 1)$ \label{l11}  \tcc*[assign]{non-edge $(cur$ $q')$ is already colored} }
                      \lElseIf{list-end of $\widehat N_{_{A}}(q')$}
                       { delete $q'$ from $Q$\;}
                   }\label{ef}  
		\tcc{Type 2 work : Lines \ref{sf1} to \ref{ef1} - Identify non-edges at $x$ affected by non-edges between $cur$ and $Q$ and update their colors if necessary}  
                   \For {each $r'$ in $R$}{\label{sf1}
	                 \While {$N_{_{A}}(r')[ptr2[r']]< cur $ and not list-end of $N_{_{A}}(r')$\label{wh2}} 
                          {Increment the pointer $ptr2[r']$\;}
		          \uIf{$N_{_{A}}(r')[ptr2[r']] = cur$}{ 
                          \lIf{$Color(xr')< color[cur]$  \label{line20}}{
			  $Color(xr') = color[cur]$\;}
		          } 
                        \lElseIf{list-end of $N_{_{A}}(r')$}
                          { delete $r'$ from $R$\;}
                      }\label{ef1}
                 }%\ENDIF
	     }%\ENDFOR   
           \end{algorithm}
\begin{lemma}
 Time spent over all invocations of Algorithm \ref{alg1} is $O(en+n^2)$.
\end{lemma}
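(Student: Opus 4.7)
The plan is to decompose the total running time across all $n_1$ invocations of Algorithm \ref{alg1} into four additive pieces: initialization and outer-loop overhead, inner \texttt{while}-loop pointer advancement, and the per-iteration constant work of the Type 1 and Type 2 inner \texttt{for}-loops. The Type 0 initializations per invocation cost $O(n)$ by using the precomputed global sorted lists $N_A, \widehat N_A, N_B, \widehat N_B$; together with the $O(n_1)$ constant-time $A_P[cur]$ checks in the outer \texttt{for}-loop, this contributes $O(n^2)$ across all $n_1$ invocations.

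For pointer advancement, within a single invocation the pointer $ptr1[q']$ traverses $\widehat N_A(q')$ monotonically, so its total advancements for $q'$ are at most $|\widehat N_A(q')|$, and these occur only when $q'\in Q_0^x$, which requires $x\in N_A(q')$. Since $|N_A(q')|+|\widehat N_A(q')|=n_1$, we have $|N_A(q')|\cdot|\widehat N_A(q')|\le n_1\cdot\min(|N_A(q')|,|\widehat N_A(q')|)$, and summing yields
\[
\sum_{q'\in B}|N_A(q')|\cdot|\widehat N_A(q')|\ \le\ n\sum_{q'}\min(|N_A(q')|,|\widehat N_A(q')|)\ \le\ n\cdot\min(m_{_{AB}},t)\ =\ en.
\]
The symmetric bound for $ptr2$ over $N_A(r')$ in Type 2 contributes another $O(en)$.

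The main obstacle is bounding the per-iteration constant cost of the inner \texttt{for}-loops: for Type 1 this totals $\sum_x\sum_{cur\in P_x}|Q_{cur}|$, which without deletions could be as large as $\Theta(n\cdot\max(m_{_{AB}},t))$. The delete-on-end mechanism removes $q'$ from $Q$ at the first $cur\in P_x$ exceeding $f(q'):=\max\widehat N_A(q')$, so each $q'\in Q_0^x$ contributes at most $|P_x\cap[1,f(q')]|+1$ for a given $x$; the ``$+1$'' terms sum to $\sum_x|Q_0^x|\le m_{_{AB}}=O(n^2)$. My plan for the principal sum $T:=\sum_x\sum_{q'\in Q_0^x}|P_x\cap[1,f(q')]|$ is to charge each counted triple $(x,q',a)$ to a structural element of $G$ via the Bi-Consecutive Adjacency Property (Lemma \ref{prop1}). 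Since $a\in P_x$ is witnessed by some $y'\in\widehat N_B(x)$ with $(a,y')\in E$, applying Bi-Consecutive to this edge yields either (i) $\min\widehat N_A(y')>a$, or (ii) $a$ is adjacent to $\{1',\ldots,y'\}$ in $B$; in case (ii), combined with $q'<\min\widehat N_B(x)\le y'$, we conclude $(a,q')\in E$, so the triple charges to an edge, while case (i) gives a structural constraint that lets it be charged to a non-edge. A careful double-counting aligned with $\min(m_{_{AB}},t)$, leveraging again the key inequality $|N_A|\cdot|\widehat N_A|\le n\min(|N_A|,|\widehat N_A|)$, bounds $T$ by $O(en)$; the symmetric argument handles the Type 2 for-loop. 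The chief difficulty is arranging the charging scheme so that the aggregated bound depends on the smaller of $m_{_{AB}}$ and $t$, rather than on either quantity alone.
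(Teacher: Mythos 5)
Your treatment of the pointer advances matches the paper's accounting (the paper also bounds that work by $\sum_{b'\in B}deg_{_{A}}(b')\,(n_1-deg_{_{A}}(b'))=O(en)$, and your justification of why this sum is $O(en)$ is in fact more explicit than the paper's). Two things do not go through, though. First, a minor one: your claim that the Type 0 work is $O(n)$ per invocation is unjustified — marking $A_P[a]=1$ for $a\in P$ requires scanning $N_{_{A}}(y')$ for every $y'\in\widehat N_{_{B}}(x)$, which is not $O(n)$ per call; the paper instead bounds this initialization globally by $\sum_{y'\in B}deg_{_{A}}(y')(n_1-deg_{_{A}}(y'))=O(en)$, and you would need the same amortization.

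The decisive problem is the step you defer. You correctly isolate the cost not covered by pointer moves and matches — the $O(1)$ touches of elements of $Q$ (and $R$) at which the pointer neither advances, nor equals $cur$, nor is at list-end — and you propose to bound $T=\sum_x\sum_{q'\in Q_0^x}|P_x\cap[1,f(q')]|$ by $O(en)$ via a Bi-Consecutive charging argument. But this bound is never proved ("a careful double-counting \dots bounds $T$ by $O(en)$" is precisely the assertion at issue, and you yourself name the dependence on $\min(m_{_{AB}},t)$ as unresolved), and in fact it is false for the loop as written, so no charging scheme can complete the plan. Take the only non-edges of $G$ to be $\{N\}\times S$, $T_A\times\{N'\}$ and $(N,N')$, where $S$ consists of the $s\approx n^{3/4}$ lowest-numbered vertices of $B$ and $T_A$ of the $\tau\approx n^{3/4}$ vertices of $A$ just below $N$; this numbering satisfies the Bi-Consecutive property (check each edge case), so by Lemma \ref{lmprop2} this is a co-bipartite CA graph whose model reproduces the numbering. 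For each $x\in T_A$ one has $\widehat N_{_{B}}(x)=\{N'\}$, hence $Q_0^x\supseteq S$, $|P_x|=\Theta(n)$, and every $q'\in S$ has $f(q')=N$, so it is never deleted and is touched once for every $cur\in P_x$; therefore $T=\Theta(\tau s n)=\Theta(n^{2.5})$ while $en+n^2=\Theta(n^2)$. So the triples you want to charge are genuinely too many: controlling this term requires either reorganizing the inner loops (e.g., touching for each $cur$ only $Q\cap\widehat N_{_{B}}(cur)$ and $R\cap N_{_{B}}(cur)$) or a different accounting, not a cleverer double count. Note also that this is where your route departs from the paper: the paper's proof charges only pointer movements, the matched iterations (Lines \ref{l11} and \ref{line20}) and deletions, and never addresses the idle touches you are attempting to bound, so you are trying to establish a strictly stronger claim than the paper's argument does — and that claim does not hold.
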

\begin{proof}
Let $m_{_{AB}} = |\{ab' \in E(G) | a \in A$ and $b' \in B\}|$ and $t = n_1 n_2 - m_{_{AB}}=|E(\overline G)|$. Let $e=\min(m_{_{AB}}, t)$ and $k=box(G)$.

 Type 0 work (Lines \ref{Ins} to \ref{Ine}) computes lists $Q$ and $R$ as defined in the algorithm and also an indicator array $A_P$ of $P$.  $Q$ and $R$ can be represented as doubly linked lists. Initializations in Line \ref{Ins} can be achieved in $O(en+n^2)$ time as follows: $A_P$ can be initialized to $0$ in $O(n)$ once for each $x \in A$. The total time for this work is $O(n^2)$. Each $y' \in B$ spends at most $|N_{_{A}}(y')|$ time for the processing of each $x \in \widehat N_{_{A}}(y')$. Thus the total time spent by all $y' \in B$ together for this initialization is $\displaystyle\sum_{y'\in B} {deg_{_{A}}(y')(n_1 -deg_{_{A}}(y'))}$ $= O(en)$.  Initialization of $Q$ in Line \ref{s2} can be done in $O(deg_{_{B}}(x))$. Summing over all $x \in A$, this amounts to $O(m_{_{AB}})=O(m)$ work. Similarly, for initializing $R$ in Line \ref{Ine}, we need $O(n_2 - deg_{_{B}}(x))$. Summed over all $x \in A$, this amounts to $O(t)$ work. Adding all the above, total cost of Type 0 work (over all invocations of Algorithm \ref{alg1}) is $O(en+n^2+m+t)=O(en+n^2)$, since $m+t=O(n^2)$.

 Let us calculate the total cost spent in Type 1 work. Note that each element $q' \in Q$ remembers the pointer position $ptr1[q']$. This means that  $ptr1[q']$ continues from where it stopped in the current iteration, while doing the Type 1 work of the next element of $P$. Therefore, pointer $ptr1[q']$ moves at most $n_1-deg_{_{A}}(q')$ times for each $q' \in Q$. When $ptr1[q']$ reaches the end of list $\widehat N_{_{A}}(q')$, $q'$ is deleted from the linked list $Q$. This makes sure that Line \ref{wh1} is repeated just $O(n_1-deg_{_{A}}(q'))$ times for each $q' \in Q$. Each $q'$ executes Line \ref{l11} whenever $cur \in \widehat N_{_{A}}(q')$. This also happens $n_1 - deg_{_{A}}(q')$ times during the processing of each $x$ where $x \in N_{_{A}}(q')$. Hence the total cost for Type 1 work (over all invocations of Algorithm \ref{alg1}) is $\displaystyle\sum_{b'\in B}{(n_1 -deg_{_{A}}(b')). deg_{_{A}}(b')} = O(en)$.

Now consider Type 2 work. Note that each element $r'\in R$ remembers the pointer position $ptr2[r']$. This means that  $ptr2[r']$ continues from where it stopped in the current iteration while doing the Type 2 work of the next element of $P$. Therefore, pointer $ptr2[r']$ moves at most $deg_{_{A}}(r')$ times for each $r' \in  R=\widehat N_{_{B}}(x)$. When $ptr2[r']$ reaches end of list $N_{_{A}}(r')$, $r'$ is dropped from the linked list $R$. This makes sure that Line \ref{wh2} is repeated only $O(deg_{_{A}}(r'))$ times for each $r' \in R$, while processing an $x$ such that $x \in \widehat N_{_{A}}(r')$. Also, Line \ref{line20} is executed only when $cur \in N_{_{A}}(r')$. This happens $deg_{_{A}}(r')$ times during the processing of each $x$, where $x \in \widehat N_{_{A}}(r')$. Summing up, the total cost for Type 2 work (over all invocations of Algorithm \ref{alg1}) is $\displaystyle\sum_{b'\in B}{deg_{_{A}}(b'). (n_1 -deg_{_{A}}(b'))} = O(en)$.

Thus the total cost spent over all invocations of Algorithm \ref{alg1} is $O(en+n^2)$ as claimed.
\qed 
\end{proof}
\subsection{Expanding Color Classes of $H^*$ to Maximal Independent Sets in $O(en+kn^2)$}
In this section, we describe an algorithm which computes an optimal box representation $\mathcal{B}=\{G_1, G_2, \cdots, G_k\}$ of $G$ as explained in the beginning of Section \ref{complexity}, where $k$ is the maximum number of colors used by the algorithm of Section \ref{appendix2}. Since edges of $H$ correspond to vertices of $H^*$ by defintion, we do not differenciate between a non-edge $ab' \in E(H)$ and its corresponding vertex in $H^*$ in the following discussions in this section. Recall from the beginning of Section \ref{complexity} that we can compute $\mathcal{B}$ by computing  $C'_i$, for $1\le i \le k$, where $C'_i$ is a maximal independent set containing $C_i$ - the $i^{th}$ color class in the optimal coloring of $H^*$ obtained by the algorithm of Section \ref{appendix2}. The following lemma suggests one way to compute these maximal independent sets. 
\begin{lemma}
 Let $C_i$ be the $i^{th}$ color class in the optimal coloring of $H^*$ obtained by the algorithm of Section \ref{appendix2}. Let $S_i= \displaystyle \bigcup_{1\le j\le i} C_i$ and let $MaxS_i$ be the set of maximal elements of $(S_i, \prec)$, i.e, $MaxS_i = \{ ab' \in S_i : \nexists cd' \in S_i$ with $ab' \prec cd'\}$. Then $MaxS_i$ is a maximal independent set in $H^*$ containing $C_i$.
\end{lemma}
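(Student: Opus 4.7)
The plan is to verify the three components of the claim in sequence: independence of $MaxS_i$ in $H^*$, the inclusion $C_i \subseteq MaxS_i$, and finally the maximality of $MaxS_i$ as an independent set. Throughout I would use the fact, established earlier via Lemma \ref{lm1}, that edges of $H^*$ correspond exactly to $\prec$-comparabilities, so ``independent set in $H^*$'' is synonymous with ``$\prec$-antichain'', together with the defining property of the greedy coloring from Section \ref{appendix2}, namely $Color(xy') = 1 + \max\{Color(e) : e \prec xy'\}$ (with the convention $\max \emptyset = 0$).

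Independence is essentially definitional: if $ab', cd' \in MaxS_i$ were $\prec$-comparable, say $ab' \prec cd'$, then $cd' \in S_i$ would witness the non-maximality of $ab'$ in $S_i$, a contradiction. For the inclusion $C_i \subseteq MaxS_i$, I would use the greedy rule directly: if $ab' \in C_i$ admitted a $\prec$-successor $cd' \in S_i$, then $Color(cd') \geq Color(ab') + 1 = i + 1$, contradicting $cd' \in S_i$; hence every element of $C_i$ is maximal in $S_i$.

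The main obstacle, and the step I expect to require the most care, is maximality. Given a candidate $uv' \in V(H^*) \setminus MaxS_i$, I would split into two cases. If $uv' \in S_i$, then by definition of $MaxS_i$ there exists $cd' \in S_i$ with $uv' \prec cd'$; I would pick $cd'$ to be $\prec$-maximal in $\{xy' \in S_i : uv' \prec xy'\}$, and argue via transitivity of $\prec$ that any $\prec$-successor of $cd'$ in $S_i$ would also be a $\prec$-successor of $uv'$ in $S_i$, so $cd'$ is in fact maximal in $S_i$, i.e.\ $cd' \in MaxS_i$. Since $uv' \prec cd'$, the two are adjacent in $H^*$, so $MaxS_i \cup \{uv'\}$ is not independent. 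If instead $uv' \notin S_i$, then $Color(uv') = j$ for some $j > i$, and I would descend along the coloring: the greedy rule supplies some $e_1 \prec uv'$ with $Color(e_1) = j-1$, then some $e_2 \prec e_1$ with $Color(e_2) = j-2$, and so on. After $j-i$ steps this yields an element $e_* \in C_i$, and iterated transitivity of $\prec$ gives $e_* \prec uv'$. By the second step $e_* \in MaxS_i$, so again $uv'$ is comparable to an element of $MaxS_i$ and cannot be added. The crux of the argument is recognizing that the ``max-plus-one'' form of the greedy rule forces a complete chain of colors $i, i+1, \ldots, j-1$ to appear among the $\prec$-predecessors of $uv'$, which is what lets us reach $C_i$ exactly rather than overshooting.
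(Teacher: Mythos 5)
Your proposal is correct and follows essentially the same route as the paper: definitional independence of the maximal elements, the greedy ``max-plus-one'' rule to show $C_i \subseteq MaxS_i$, and the same two-case maximality argument. The only difference is that you spell out the details the paper leaves implicit (choosing a $\prec$-maximal successor in the first case, and the color-descent chain reaching $C_i$ in the second), both of which are valid.
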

\begin{proof}
 $MaxS_i$, being the set of maximal elements of $(S_i, \prec)$, forms an independent set in $H^*$. 
 Recall that, as per our algorithm, for any  $ab' \in E(H)$, $Color(ab') = \displaystyle\max_{e \in E(H) : e \prec ab'}{Color(e) + 1}$. Consider $ab' \in C_i$. If $\exists cd' :  ab' \prec cd'$, then $Color(cd') > Color(ab') = i$ and therefore, $cd' \notin S_i$. Hence, by the definition of $MaxS_i$, $ab'  \in  MaxS_i$. Thus, $C_i \subseteq MaxS_i$. 

Consider any $ab' \notin MaxS_i$. Either $ab' \in (S_i \setminus MaxS_i)$ or $ab' \notin S_i$. In the former case, $\exists cd' \in MaxS_i$ with $ab' \prec cd'$. In the latter case, when $ab' \notin S_i$, $Color(ab') > i$ and it is easy to see from our coloring strategy that $\exists cd' \in C_i \subseteq MaxS_i$ with $cd'  \prec ab'$. Therefore, in both cases, if $ab'$ is added to  $MaxS_i$, it will no longer be an independent set. Thus, $MaxS_i$ is a maximal independent set containing $C_i$.  
\qed
\end{proof} 
The next question is to efficiently compute $MaxS_i$, for $1\le i\le k$. For this purpose, we introduce the following definition.
\begin{definition}\label{lnext}
For each $ab' \in E(H)$, let
\begin{displaymath}
Next(ab') = \left\{ \begin{array}{ll}
 \displaystyle \min_{e \in E(H), ab' \prec e } \{Color(e)\}, {\mathrm{\hspace{1cm} if \ }\exists e \in E(H) \mathrm{\ such \ that \ }ab' \prec e}\\
 k+1,\mathrm{ \hspace{3.7cm} otherwise}
  \end{array} \right.
\end{displaymath}
\end{definition}
\begin{lemma}
 For $1 \le i \le k$, $MaxS_i = \{ab' \in S_i$ : $Next(ab') > i\}$
\end{lemma}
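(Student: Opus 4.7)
The plan is to prove this equality by showing containment in both directions, using only the definitions of $MaxS_i$, $S_i$, and $Next$. Both directions follow from short contradiction arguments, so the overall proof will be essentially definitional.

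For the forward containment ($MaxS_i \subseteq \{ab' \in S_i : Next(ab') > i\}$), I would start with an arbitrary $ab' \in MaxS_i$. By definition of $MaxS_i$ we immediately have $ab' \in S_i$, so it remains to check $Next(ab') > i$. Suppose for contradiction that $Next(ab') \le i$. Then by Definition \ref{lnext} there exists $e \in E(H)$ with $ab' \prec e$ and $Color(e) = Next(ab') \le i$, which places $e$ in $S_i$. But the existence of such an $e \in S_i$ with $ab' \prec e$ contradicts $ab'$ being a maximal element of $(S_i, \prec)$.

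For the reverse containment, I would take $ab' \in S_i$ with $Next(ab') > i$ and argue that $ab' \in MaxS_i$. If not, there would be some $cd' \in S_i$ with $ab' \prec cd'$, which by the definition of $Next$ forces $Next(ab') \le Color(cd')$; but $cd' \in S_i$ means $Color(cd') \le i$, contradicting the assumption $Next(ab') > i$. Hence no such $cd'$ exists and $ab'$ is maximal in $(S_i, \prec)$, so $ab' \in MaxS_i$.

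There is no real obstacle here; the argument is a purely formal unpacking of the two definitions, and no structural property of $H^*$ or of the relation $\prec$ beyond what is already in Definition \ref{lnext} is needed. The only care required is in the edge case where $Next(ab') = k+1$, which simply means no $e$ with $ab' \prec e$ exists at all; then $ab'$ is vacuously maximal in every $S_i$ that contains it, consistent with the stated formula since $k+1 > i$ for all $i \le k$.
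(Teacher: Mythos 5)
Your proof is correct and follows essentially the same argument as the paper: both directions reduce to the observation that $ab'$ fails to be maximal in $(S_i,\prec)$ exactly when some $cd'$ with $ab' \prec cd'$ has $Color(cd') \le i$, which is equivalent to $Next(ab') \le i$ by Definition \ref{lnext}. The explicit handling of the $Next(ab') = k+1$ case is a harmless addition already implicit in the paper's argument.
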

\begin{proof}
  If $ab' \notin MaxS_i$, then $\exists cd' \in S_i$ with $ab' \prec cd'$. It will follow that $Next(ab') \le Color(cd') \le i$. Conversely, if $Next(ab') \le i$, then $\exists cd' \in S_i$ with $ab' \prec cd'$ and hence $ab' \notin MaxS_i$. 
\qed
\end{proof}
Let $Next(ab')$ for all $ab' \in E(H)$ be initialized to $k+1$. This can be done in $O(|E(H)|)$ = $O(n^2)$. Consider the following strategy. Take a non-edge $e \in E(H)$ and update $Next(ab')$ of all $ab' \prec e$ with  $\min( Next(ab'), Color(e))$. When we have repeated this for all $e \in E(H)$, it is easy to see that the values of $Next(ab')$ for every $ab' \in E(H)$ will satisfy Definition \ref{lnext}. 

Here, we show that we can do this in $O(en+n^2)$. In order to achieve this, we process the non-edges incident at a vertex $x \in A$ together, in an invocation of Algorithm \ref{alg2} - hereafter called the processing of $x$. During the processing of $x$, each non-edge $xy'$ incident at $x$ updates $Next(ab')$ of all $ab' \prec xy'$ with $\min(Next(ab'), Color(xy'))$. We will process $x=1, 2, \cdots, n_1$ in that order. The data structures used are similar to those used for Algorithm \ref{alg1}. 

Consider an $x \in A$. As in Section \ref{appendix2}, let $P=\{a \in N_{_{A}}(\widehat N_{_{B}}(x))|a < x\}$, $Q=\{ b' \in N_{_{B}}(x)|b'< \min{\widehat N_{_{B}}(x)}\}$ and $F_{y'}= \{ab'\in E(H):ab' \prec xy' \}$. Let $T_x = \displaystyle \bigcup_{y' \in \widehat N_{_{B}}(x)} \{ab' \in E(H) : ab' \prec xy'\} = \displaystyle \bigcup_{y' \in \widehat N_{_{B}}(x)} F_{y'}$. Observe that, for any $ab' \in E(H) \setminus T_x$, the value of $Next(ab')$ does not depend on the colors assigned to non-edges incident at $x$. By the claim proved in Section \ref{appendix2}, $F_{y'}=\{ab'\in E(H):a \in N_{_{A}}(y')\cap P$ and $b' \in Q\}$. Hence, $T_x = \{ab'\in E(H):a \in P$ and $b' \in Q\}$. Hence, during the processing of $x \in A$, we just need to update the $Next$ values of non-edges between $P$ and $Q$ only. 

Consider any $ab' \in T_x$. The set of non-edges incident at $x$ whose colors can affect the value of $Next(ab')$ belong to the set $\{xy' : y' \in \widehat N_{_{B}}(x)$ and $ab' \prec xy' \}$ = $\{xy' : y' \in \widehat N_{_{B}}(x) \cap N_{_{B}}(a)\}$. Notice that for any fixed $a \in P$, this set is independent of which $b' \in Q$ is being considered. Let us denote this set by $U_a$. For any non-edge $ab' \in E(H)$ with $a \in P$ and $b' \in Q$, $Next(ab') \le \displaystyle \min_{e \in U_a}\{Color(e)\}$. Hence, we can make the following inference, which is critical for the efficiency of Algorithm \ref{alg2}:\\
\textbf{{Fact. }}
For a fixed vertex $a \in P$, for any non-edge $ab' \in E(H)$ between $a$ and $Q$, we just need to update $Next(ab')$ with $\min(Next(ab'), MinColor[a])$, where $MinColor[a]= \displaystyle \min_{e \in U_a}\{Color(e)\}$, irrespective of which $b' \in Q$ is involved. 

In Algorithm \ref{alg2}, Type 1 work computes $MinColor[a]$ for every $a \in P$. Type 2 work updates $Next(ab')$, for each $ab' \in T_x$ with $\min(Next(ab'),MinColor[a])$. By the time we have processed all $x \in A$, all non-edges $e \in E(H)$ get processed and hence $Next(ab')$ for each $ab' \in E(H)$ is correctly computed.
\begin{algorithm}
     %\linesnumbered
     %\SetNoline
     %\dontprintsemicolon
     \LinesNumbered 
     \SetAlgoNoLine
     \DontPrintSemicolon
     \caption{Each non-edge $xy'$ incident at vertex $x \in A$ updates $Next(ab')$ of all non edges $ab' \prec xy'$}
     \label{alg2} 
       \KwIn{$x \in A$}
       \KwOut{The updated $Next(ab')$ for each non edges $ab' \prec xy'$ where $y' \in \widehat N_{_{B}}(x)$}    
        \tcc{Type 0 work : Lines \ref{Ins1} to \ref{Ine1} - Initializations}
         \tcc{Let $P=\{a \in N_{_{A}}(\widehat N_{_{B}}(x)) | a < x\}$}
          {For $1\le a \le n_1$, let $A_P[a]=0$ initially. For each $a \in P$, set $A_P[a]=1$ and $MinColor[a]=k+1$\label{Ins1} }\;
    {Compute  $Q=\{b' \in N_{_{B}}(x)| b'< p'\}$, where $p' =\min{(\widehat N_{_{B}}(x))}=\widehat N_{_{B}}(x)[1]$ and initialize $ptr2[b']=$ start of  $\widehat N_{_{A}}(b')$ for $b' \in Q$\label{s22}}\;
         {Compute $R=\widehat N_{_{B}}(x)$ and initialize $ptr1[r']=$ start of  $N_{_{A}}(r')$ for $r' \in R$ \label{Ine1}}\;
         \For{$cur= 1$ to $n_1$}{                           
            \If {$A_{P}[cur]=1$}{
                    \tcc{Type 1 work : Lines \ref{sf2} to \ref{ef2} - Computing $MinColor[cur]=$ the minimum color given to a non-edge between $x$ and $N_{_{B}}(cur) \cap R$}        
                    \For {each $r'$ in $R$\label{sf2}}{
	                 \While {$N_{_{A}}(r')[ptr1[r']]< cur $ and not list-end of $N_{_{A}}(r')$\label{wh11}} 
                          {Increment the pointer $ptr1[r']$\;}
		          \uIf{$N_{_{A}}(r')[ptr1[r']] = cur$}{ 
			  $MinColor[cur] = \min(MinColor[cur],Color(xr'))$\label{line11}\;}
		          \lElseIf{list-end of $N_{_{A}}(r')$}
                          { delete $r'$ from $R$\;}			  
                       }\label{ef2}
              \tcc{Type 2 work : Lines \ref{sf3} to \ref{ef3} - Update $Next$ of non-edges between $cur$ and $Q$}  
                 \For {each $q'$ in $Q$\label{sf3}}{ 
                        \While {$\widehat N_{_{A}}(q')[ptr2[q']]< cur $ and not list-end of $\widehat N_{_{A}}(q')$\label{wh21}} 
                         {Increment the pointer $ptr2[q']$\;}                       
		      \uIf{$\widehat N_{_{A}}(q')[ptr2[q']] = cur$}{
                          $Next(cur$ $q') = \min(Next(cur$ $q'), MinColor[cur])$ \label{l20}
		       }
                     \lElseIf{list-end of $\widehat N_{_{A}}(q')$}
                       { delete $q'$ from $Q$\;}
                   }\label{ef3}

                  }  
               }     
     \end{algorithm}
\begin{lemma}
  Time spent over all invocations of Algorithm \ref{alg2} is $O(en+n^2)$.
\end{lemma}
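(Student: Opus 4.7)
The plan is to mirror, almost line-for-line, the amortized analysis that was used to prove the corresponding $O(en+n^2)$ bound for Algorithm \ref{alg1}. The structural reason this should work is that Algorithm \ref{alg2} maintains the same three data structures $P$, $Q$, $R$ (with $A_P$, $ptr1$, $ptr2$), indexed over the same sets $\widehat N_{_{B}}(x)$ and $N_{_{B}}(x)$, and differs only in what it computes in the Type 1 and Type 2 inner loops (namely a minimum of $Color(xr')$ rather than a maximum of $Color(cur\,q')+1$, and an update of $Next$ rather than an update of $Color$). So I would again split the total cost over all $x \in A$ into the three buckets Type 0, Type 1, Type 2, and bound each separately.

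For Type 0, the zeroing of $A_P$ costs $O(n)$ per invocation, $O(n^2)$ in total. Setting $A_P[a]=1$ and $MinColor[a]=k+1$ for $a \in P$ is charged as in the proof for Algorithm \ref{alg1}: each $y' \in B$ contributes $\deg_{_{A}}(y')$ work during the processing of each $x \in \widehat N_{_{A}}(y')$, and summing over $y'$ gives $\sum_{y'\in B}\deg_{_{A}}(y')(n_1-\deg_{_{A}}(y')) = O(en)$. Initializing $Q$ costs $O(\deg_{_{B}}(x))$ per invocation, so $O(m)$ overall, and initializing $R$ costs $O(n_2-\deg_{_{B}}(x))$ per invocation, so $O(t)$ overall; together this is $O(n^2)$. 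Hence Type 0 totals $O(en+n^2)$.

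For Types 1 and 2, the key invariant is that within a single invocation for vertex $x$, the pointers $ptr1[r']$ (for $r' \in R$) and $ptr2[q']$ (for $q' \in Q$) persist across successive values of $cur$ and only move forward through the sorted lists $N_{_{A}}(r')$ and $\widehat N_{_{A}}(q')$. Consequently, during one invocation the total work charged to an $r' \in R = \widehat N_{_{B}}(x)$ in the Type 1 loop is $O(\deg_{_{A}}(r'))$ (pointer advances plus at most $\deg_{_{A}}(r')$ executions of Line \ref{line11}), and the total work charged to a $q' \in Q$ in the Type 2 loop is $O(n_1-\deg_{_{A}}(q'))$ (pointer advances plus at most $n_1-\deg_{_{A}}(q')$ executions of Line \ref{l20}). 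Summing each bound over all $x \in A$ and each $b' \in B$ yields $\sum_{b'\in B}\deg_{_{A}}(b')(n_1-\deg_{_{A}}(b')) = O(en)$ for each of Type 1 and Type 2 — note that compared to Algorithm \ref{alg1} the roles of $N_{_{A}}$ and $\widehat N_{_{A}}$ are swapped between the two loops, but the product bound is symmetric, so the same $O(en)$ estimate applies. Adding all three buckets gives the claimed $O(en+n^2)$.

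The only non-routine step I anticipate is the bookkeeping needed to justify the persistence-of-pointers argument: specifically, confirming that the sets $P$, $Q$, $R$ are scanned compatibly with the global ordering $1,2,\dots,n_1$ on $A$ so that a pointer $ptr1[r']$ (resp.\ $ptr2[q']$) never needs to be reset inside an invocation, and that an $r'$ or $q'$ deleted from its list is never re-examined in the same invocation. This is essentially the same amortization that was used in Section \ref{appendix2} and goes through unchanged because the Bi-Consecutive Adjacency numbering gives monotone behavior of $N_{_{A}}(\cdot)$ and $\widehat N_{_{A}}(\cdot)$ with respect to $cur$; once that is in place, the rest of the argument is a direct copy of the earlier lemma's proof.
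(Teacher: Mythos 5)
Your proposal is correct and follows essentially the same route as the paper: the same Type 0/Type 1/Type 2 decomposition, the same persistent-pointer amortization with deletion from $Q$ and $R$, and the same charging scheme leading to $\sum_{b'\in B} deg_{_{A}}(b')\,(n_1 - deg_{_{A}}(b')) = O(en)$ for each of the two inner loops, plus $O(en+n^2)$ for initialization. The observation that the roles of $N_{_{A}}$ and $\widehat N_{_{A}}$ are swapped relative to Algorithm \ref{alg1} but yield the same symmetric product bound is exactly how the paper argues it.
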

\begin{proof}
 Type 0 work done by Algorithm \ref{alg2} (Lines \ref{Ins1} to \ref{Ine1}) is similar to the Type 0 work of Algorithm \ref{alg1} and hence the total cost of Type 0 work over all invocations of Algorithm \ref{alg2} is $O(en+n^2)$.

Let us calculate the total cost spent in Type 1 work. Note that each element $r'\in R$ remembers the pointer position $ptr1[r']$. This means that  $ptr1[r']$ continues from where it stopped in the current iteration while doing the Type 1 work of the next element of $P$. Therefore, pointer $ptr1[r']$ moves at most $deg_{_{A}}(r')$ times for each $r' \in  R=\widehat N_{_{B}}(x)$. When $ptr1[r']$ reaches end of list $N_{_{A}}(r')$, $r'$ is dropped from the linked list $R$. This makes sure that Line \ref{wh11} is repeated only $O(deg_{_{A}}(r'))$ times for each $r' \in R$, while processing an $x$ such that $x \in \widehat N_{_{A}}(r')$. Also, Line \ref{line11} is executed only when $cur \in N_{_{A}}(r')$. This happens $deg_{_{A}}(r')$ times during the processing of each $x$, where $x \in \widehat N_{_{A}}(r')$. Summing up, the total cost for Type 1 work (over all invocations of Algorithm \ref{alg2}) is $\displaystyle\sum_{b'\in B}{deg_{_{A}}(b'). (n_1 -deg_{_{A}}(b'))} = O(en)$.

Now consider Type 2 work. Note that each element $q' \in Q$ remembers the pointer position $ptr2[q']$. This means that  $ptr2[q']$ continues from where it stopped in the current iteration, while doing the Type 2 work of the next element of $P$. Therefore, pointer $ptr2[q']$ moves at most $n_1-deg_{_{A}}(q')$ times for each $q' \in Q$. When $ptr2[q']$ reaches the end of list $\widehat N_{_{A}}(q')$, $q'$ is deleted from the linked list $Q$. This makes sure that Line \ref{wh21} is repeated just $O(n_1-deg_{_{A}}(q'))$ times for each $q' \in Q$. Each $q'$ executes Line \ref{l20} whenever $cur \in \widehat N_{_{A}}(q')$. This also happens $n_1 - deg_{_{A}}(q')$ times during the processing of each $x$ where $x \in N_{_{A}}(q')$. Hence the total cost for Type 2 work (over all invocations of Algorithm \ref{alg2}) is $\displaystyle\sum_{b'\in B}{(n_1 -deg_{_{A}}(b')). deg_{_{A}}(b')} = O(en)$.

Thus the total time spent over all invocations of Algorithm \ref{alg2} is $O(en+n^2)$ as claimed.
\qed
\end{proof}

Once $Next(ab')$ for (vertex of $H^*$ corresponding to) each $ab' \in E(H)$ is correctly computed by invoking Algorithm \ref{alg2} for each $x \in A$, we compute $MaxS_i = \{ab' \in S_i$ : $Next(ab') > i\}= \{ab' \in E(H)$ : $Color(ab') \le i$ and $Next(ab') > i\}$, for $1 \le i \le k$. This can be done in overall $O(k.|E(H)|)$=$O(k. n^2)$ time. For $1\le i\le k$, let $E_i=\{e \in E(H)$: $e$ corresponds to a vertex of $H^*$ in $MaxS_i\}$. As mentioned in the beginning of Section \ref{complexity}, $\{G_i=\overline {H_i}$ : $H_i=(V, E_i)$, $1 \le i \le k \}$, gives an optimal box representation of $G$. Since each $G_i$ can be computed from $E_i$ in $O(n^2)$, the overall running time of computing the box representation using $E_i$s is $O(kn^2)$. 

Thus, the total time used for the algorithm for computing an optimal box representation of $G$ is $O(en+kn^2)$ as claimed.
\end{document}